\newcommand{\RNA}{{\mbox{RNA}}}
\newcommand{\Tat}{{\mbox{Tat}}}
\newcommand{\LTR}{{\mbox{LTR}}}
\newcommand{\TAR}{{\mbox{TAR}}}
\newcommand{\p}{{\mbox{p}}}
\newcommand{\env}{{\mbox{\textit{env}}}}
\newcommand{\beanum}{\begin{eqnarray}}
\newcommand{\eeanum}{\end{eqnarray}}
\newcommand{\bea}{\begin{eqnarray}}
\newcommand{\eea}{\end{eqnarray}}
\newcommand{\bsa}{\begin{subeqnarray}}
\newcommand{\esa}{\end{subeqnarray}}
\newcommand{\be}{\begin{equation}}
\newcommand{\ee}{\end{equation}}
\begin{document}

\title{Mathematical analysis and potential therapeutic implications of a novel HIV-1 model of basal and activated transcription in T-cells and macrophages}



\author{Tin Phan$^1$                 \and
        Catherine DeMarino$^2$       \and 
        Fatah Kashanchi$^2$          \and
        Yang Kuang$^1$               \and
        Daniel M. Anderson$^3$       \and
        Maria Emelianenko$^3$        
}

\authorrunning{T. Phan \textit{et al}.} 

\institute{
        Tin Phan \at
              \email{tin.t.phan@asu.edu}           
          \and
        Maria Emelianenko \at
              \email{memelian@gmu.edu}           
            \and
        $^1$ School of Mathematical and Statistical Sciences, Arizona State University, Tempe, AZ, USA \\
        $^2$ Laboratory of Molecular Virology, School of Systems Biology, George Mason University, Manassas, VA, USA \\
        $^3$ Department of Mathematical Sciences, George Mason University, Fairfax, VA, USA
}

\date{Received: date / Accepted: date}

\maketitle

\begin{abstract}

HIV-1 affects tens of millions of people worldwide.
Current treatments often involve a cocktail of antiretroviral drugs, which are effective in reducing the virus and extending life spans.
However, there is currently no FDA-approved HIV-1 transcription inhibitor. Furthermore, there have only been a few attempts to model the transcription process in HIV-1.
In this work, we extend a novel three-state model of HIV-1 transcription introduced in \citep{demarino2020differences} that has been developed and validated against experimental data. After fitting this model to \textit{in vitro} data, significant differences in the transcription process of HIV-1 in T-cells and macrophages have been observed. In particular, the activation of the HIV-1 promoter in T-cells appears to take place rapidly as the Tat protein approaches a critical threshold. In contrast, the same process occurs smoother in macrophages.

In this work, we carry out systematic mathematical analyses of the model to complement experimental data fitting and sensitivity analysis performed earlier.
We derive explicit solutions of the model to obtain exact transcription process decay rates for the original model and then study the effect of nonlinearity on the system behavior, including the existence and the local and global stability of the positive equilibrium. We were able to show the stability of the positive steady state in limiting cases, with the global stability in the general case remaining an open question. 

By modeling the effect of transcription-inhibiting drug therapy, we provide a nontrivial condition for it to be effective in reducing viral load. Moreover, our numerical simulations and analysis point out that the effect of the transcription-inhibitor can be enhanced by synchronizing with standard treatments, such as combination antiretroviral therapy, to allow the reduction of total dosages and toxicity.
%

\keywords{HIV-1 transcription \and F07\#13 \and Combination antiretroviral therapy \and Transcription-inhibitor \and Treatment combination}
\subclass{MSC 37C75 \and MSC 92C42 \and MSC 92C50}
\end{abstract} 

\section{Introduction}
\label{intro}
The Human Immunodeficiency Virus Type I (HIV-1) is the causative agent of acquired immune deficiency syndrome (AIDS). Since the advent of combination antiretroviral therapy (cART) in the early 1990s, infected individuals are living longer, healthier lives and transmission rates have slowed down.
However, of the 36.9 million infected worldwide, only 21.7 million people were reported to be accessing cART and even fewer were maintaining the strict adherence required by the therapy~\citep{unaidsdate_2019}. 
Current cART regimens have been developed to target HIV-1       at almost every stage of the viral life cycle. These include fusion/entry inhibitors which target HIV-1 cellular receptors and associated viral proteins, reverse transcriptase inhibitors which prevent the production of viral DNA from RNA, integrase inhibitors that function to mitigate integration of the virus into host DNA, and protease inhibitors which block maturation of viral proteins. 
The combination of several inhibitors is effective in lowering viral titers and reducing morbidity and mortality in infected individuals~\citep{heaton2010hiv,deeks2013end,mothobi2012neurocognitive}.
However, to date, there are no FDA-approved antiretrovirals that target HIV-1 transcription. 
This therapeutic gap leads to persistent, low-level viral transcription despite suppressive treatment, a concept which has been termed ``leaky latency'', resulting in approximately $1\times10^3$ copies of cell-associated viral RNA in infected cells~\citep{furtado1999persistence,hatano2012cell,kumar2007human}. While viremia is adequately controlled ($<50$ viral RNA/mL), cell-associated viral RNA can contribute to chronic inflammation, rapid viral rebound, immune dysfunction via direct mechanisms, stochastic production of viral proteins, or via release of viral RNA in extracellular vesicles~\citep{mccauley2018intron,akiyama2018hiv,ferdin2018viral,narayanan2013exosomes,sampey2016exosomes,demarino2018antiretroviral,hladnik2017trans,li2016size,henderson2019presence}.

Despite the presence of cART, HIV-1       can persist in viral reservoirs including long-lived memory CD4$^+$ T-cells, blood-brain barrier protected myeloid cells of the central nervous system (CNS), and low cART penetration lymphoid tissues such as lymph nodes and gut-associated lymphoid tissue (GALT)~\citep{sengupta2018targeting,li2016astrocytes,garrido2015translational,dave2018follicular,hatano2013comparison}. 
These reservoirs can be maintained through several mechanisms including chromatin modifications and blocks in viral transcription initiation and elongation. Although HIV-1 can persist in a latent state for long periods, activation of latently infected cells through antigen stimulation or cytokine activation can lead to the induction of HIV-1 transcription factors such as nuclear factor kappa-light-chain-enhancer of activated B cells (NF-$\kappa$B) or nuclear factor of activated T-cells (NFAT).
Production of these transcription factors can, in turn, cause viral reactivation of the latent provirus leading to transcription of the HIV-1 genome and subsequent production of viral proteins \citep{chou2013hiv,mbonye2014transcriptional,kumar2014hiv}.
Importantly, activation of the virus elicits cytolysis and immune-mediated responses to clear the virus. This mechanism has led to the development of a therapeutic strategy termed ``shock and kill'', an approach which takes aim at latency-mediating mechanisms, such as histone deacetylases (HDACs)~\citep{archin2009expression,lehrman2005depletion,wei2014histone}, using latency-reversing agents (LRAs) to reactivate and promote immune clearance of the virus.
Conversely, others have proposed an opposite strategy known as ``lock and block'' which focuses on promoting an inactive state of the HIV-1       LTR to inhibit viral transcription and virion production through the use of latency promoting agents (LPAs). 
These studies have led to the identification of several HIV-1       transcription inhibitors which have shown success \textit{in vitro, in vivo}, and in clinical trials~\citep{mousseau2012analog,mousseau2015tat,kim2016inhibition,jean2017curaxin,kessing2017vivo,rutsaert2019safety,hayashi2017screening}.

While there is a rich literature of mathematical modeling for HIV-1 transmission at the population level 
\citep{eaton2012hiv,omondi2018mathematical,velasco2002could,li2018mathematical} 
and its interaction with the immune system with or without treatments
\citep{perelson1999mathematical,wodarz2002mathematical,wang2016dynamics,adak2018analysis},
mathematical models at the molecular level for HIV-1 are far and few between. 
More recently, 
\cite{chavali2015distinct}
developed and showed that a multi-state promoter HIV-1 model is perhaps better at capturing the heterogeneous reactivation of HIV-1 in response to treatments (e.g. ``shock and kill'' therapy) compared to the single-state promoter model. 
\cite{ke2015modeling}
also used a multi-state promoter model to study the effect of Vorinostat, a drug used in the activation of HIV-1 transcription, with experimental data. 
Additionally, \cite{gupta2018trade} utilized a multi-state promoter model to study the synergy in combination of latency-reversing therapies using stochastic simulation. 

In prior work, we have developed a three-state LTR model of HIV-1   transcription~\citep{demarino2020differences}. 
    This model evaluates various states of the HIV-1       LTR, repressed ($\LTR_R$), intermediate ($\LTR_I$), and activated ($\LTR_A$) in response to various stimuli including transcription inducers. 
    Furthermore, we modeled the transcription of two viral RNAs; a short non-coding RNA trans-activation response (TAR) element, and genomic RNA (\textit{env}), as well as levels of viral proteins Tat (trans-activator of transcription), Pr55, and p24 in response to changes in the LTR state. 
    This model has been validated in two types of immune cells, T-cells, and myeloids, using biochemical assays which assess each parameter and model predictions at extended time frames. 
    In contrast to the model by \cite{chavali2015distinct}, \cite{demarino2020differences} took into account that the production of $\Tat$, an early HIV-1 protein, only occurs during the intermediate activation state, while genomic RNA produced during the activated state is used to facilitate the production of viral particles.
    The simple structure of the model allows for direct incorporation of various therapies, which potentially serves as a valuable tool in evaluating viral transcription in response to various stimuli including LRAs and LPAs. 

In this work, we carry out systematic mathematical analyses of the previous model to show its biological and mathematical validity.
The original model formulation utilizes a switching function to model the $\Tat$-dependent functional responses, which limits the ability of the model to characterize differences in the transcriptional behaviors in T-cells and macrophages. Additionally, it leads to a discontinuity, which may not be biologically relevant.
To address this issue, we extend the model to consider continuous $\Tat$-dependent functional responses.
Similarly, we carry out mathematical analyses and data fitting for the new model.
We provide stability results in several limiting cases; however, the global stability of the positive steady state in the general case is still an open question.
By comparing the two models (switching vs. continuous response), we find observations that provide insights into the transcription process of HIV-1.
Specifically, there is a clear distinction in the transcriptional behaviors between T-cells and macrophages, which is dependent on the amount of $\Tat$ protein.
Finally, we use the model to study the effectiveness of an experimental transcription-inhibitor drug.
Our results suggest that the Tat peptide mimetic transcriptional inhibitor (F07\#13) is synchronous with standard treatment. Thus, by combining F07\#13 with standard treatment such as cART, the total dosage and the potential side effects may be reduced.

The remainder of this paper is organized as follows. 
In section~\ref{sec:math_model}, we briefly introduce the mathematical model and describe how some treatments of HIV-1 are incorporated into the model.
In section~\ref{sec:analysis}, we carry out the study of the properties of the model and its extension. The main goal is to demonstrate that the model exhibits the expected biological dynamics, which entails the analyses of positive invariance, boundedness and stability of solutions.
Utilizing experimental data in \cite{demarino2020differences}, we carry out parameter estimation to differentiate the dynamics between T-cells and macrophages in section~\ref{sec:param_esti}.
An important novelty of the model is its ability to incorporate multiple treatments of HIV-1. 
Therefore, in section~\ref{sec:drug_effect}, we study the effect of the transcriptional inhibitor F07\#13 in combination with other drugs.
Finally, we discuss our results in section~\ref{sec:disscussion}. Derivation of closed form solution of the linear model and details of numerical parameter estimation are provided in the Appendix.

\section{Mathematical model}
\label{sec:math_model}
In \cite{demarino2020differences}, the following three-state HIV-1 model of the transcription process was derived and validated against experimental data in T-cells and macrophages.
\begin{figure*}[!ht]
\centering
\includegraphics[scale=.6]{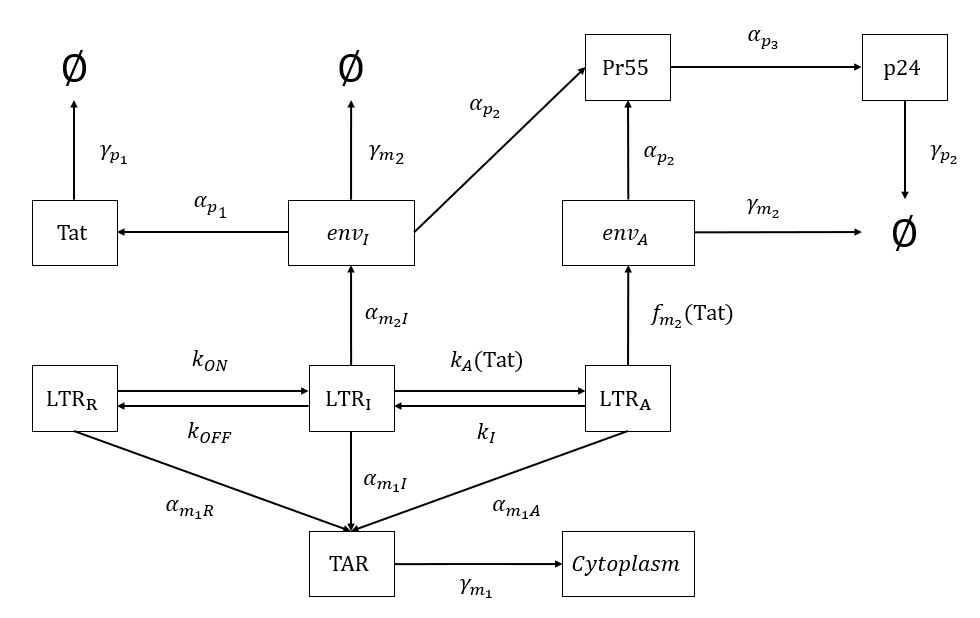}
\caption{HIV-1 transcription model. Three States of $\LTR$, e.g. repressed ($\LTR_R$), intermediate ($\LTR_I$), activated ($\LTR_A$). The long and short RNA ($\env_I$, $\env_A$ and $\TAR$), $\Tat$, $\Pr55$ and p$24$ interactions. The figure is adapted from \cite{demarino2020differences} under a Creative Commons Attribution (CC BY) license.}
\label{fig:diagram}
\end{figure*}
%
\begin{eqnarray}
\label{eq:main}
 \frac{d}{dt} \LTR_R &  =  & k_{OFF} \LTR_I - k_{ON} \LTR_R \\
 \label{eq: LTR_I}
 \frac{d}{dt} \LTR_I   & = &  - \left[\frac{w_3 w_4}{w_5} k_A(\Tat) + k_{OFF} \right] \LTR_I + w_1 k_I \LTR_A  + k_{ON} \LTR_R  \\
 \label{eq: LTR_A}
 \frac{d}{dt} \LTR_A   & = & \frac{w_3 w_4}{w_5} k_A(\Tat) \;\LTR_I - w_1 k_I  \LTR_A \label{eq:main3} 
 \\
 \label{eq: Tat}
 \frac{d}{dt} \Tat & = &  \alpha_{p_1} \env_{I} - \gamma_{p_1}  \Tat  \\
 \frac{d}{dt} \TAR   & = &    \alpha_{m_1,R} \LTR_R  + \alpha_{m_1,I} \LTR_I + \alpha_{m_1,A} \LTR_A - \gamma_{m_1} \TAR  \\
 \label{eq: env_I}
 \frac{d}{dt} \env_{I}  & = &   \alpha_{m_2,I} \LTR_I - \left( \gamma_{m,2} + \alpha_{p_1} + \alpha_{p_2} \right) \env_{I}  \\ 
 \frac{d}{dt} \env_{A} & = &   f_{m_2} (\Tat) \LTR_A - \left( \gamma_{m,2} + \alpha_{p_2} \right) \env_{A}   \\
 \frac{d}{dt}\Pr55 & = & \alpha_{p_2} \env_{I} + \alpha_{p_2} \env_{A} - (\alpha_{p_3}/w_2) P_{r55}  \\
 \label{eq:main_end}
 \frac{d}{dt}\p24 & = & (\alpha_{p_3}/w_2) P_{r55}  - \gamma_{p_2} p_{24} 
\end{eqnarray}

The model incorporates important features of the basal and activated transcription of the HIV-1        genome. 
The Long Terminal Repeat (LTR) is categorized into three stages, suppressed ($\LTR_R$), intermediate ($\LTR_I$) and activated ($\LTR_A$), similar to that of the model in \cite{chavali2015distinct}.
The `OFF' states refer to the repressed and intermediate HIV-1        promoters, while the `ON' state refers to the activated HIV-1        promoter. Here, $k_{ON}, k_{OFF}, k_A(\Tat)$ and $k_I$ are transition rates from one state of LTR to another as indicated in Figure~\ref{fig:diagram}. The total LTR is assumed to be conserved, so each LTR state in the model represents the proportion of LTR in the respective state.

Another key feature of the model is the division of viral RNA into short-non-coding RNA and long-genomic RNA, which are characterized by the amount of TAR and \textit{env}, respectively. The model further divides the \textit{env} according to its promoter LTR state, or $\env_I$ and $\env_A$ corresponding to $\LTR_I$ and $\LTR_A$, respectively.
Additionally, TAR is produced by all three states of LTR, but generally at different rates.
$\Tat$ is produced via the translation of a multiply-spliced mRNA, which is represented in the model as the transition from $\env_I$ to $\Tat$. The presence of $\Tat$ directly affects the activation rate of the intermediate LTR state. Hence, The coefficient $k_A(\Tat)$ is expected to depend on the level of $\Tat$. 
Due to the quick transition to the active state, this can be approximated as a step function or a Hill function (see Section \ref{sec:analysis}). The value of $\Tat_{crit}$ is the estimated number of $\Tat$ required to overcome potential sequestration by $\TAR$ in the cytoplasm to allow for efficient Tat-activated transcription.
$\Tat$ further enhances the transcription rate of activated LTR to produce $\env_A$. This rate can also be represented by a Hill function with the same $\Tat_{crit}$ value as $k_A(\Tat)$.
Finally, both $\env_I$ and $\env_A$ are used to produce the HIV-1 gag polyprotein, $\Pr55$. Following its production, $\Pr55$ is cleaved into smaller proteins, one of which is $\p24$, which forms the capsid and is tractable experimentally.
Additional details on the parameter values are listed in Table~\ref{table:parameters}.  

\begin{table}
\begin{center}
\begin{tabular}{lcccc}
\hline
                 & Unit                              & definition                  & T-Cell    & macrophages \\ \hline\hline
$k_{ON}$         & [H1 pc/mL/hr] & $\LTR_R \rightarrow \LTR_I$ & $5.785\%$ & $9.245\%$ \\[.75 ex]
$k_{OFF}$        & [H1 pc/mL/hr] & $\LTR_I \rightarrow \LTR_R$ & $1.220\%$ & $1.228\%$ \\[.75 ex]
$k_{A}(\Tat)$    & [H1 pc/mL/hr] & $\LTR_I \rightarrow \LTR_A$ & $3.409\%$ & $9.010\%$ \\[.75 ex]
$k_{I}$          & [H1 pc/mL/hr] & $\LTR_A \rightarrow \LTR_I$ & $0.0\%$   & $2.451\%$ \\[.75 ex]
$\alpha_{m_1,R}$ & [copies/mL/hr]   & $\LTR_R \rightarrow \TAR$   & $2.50 \times 10^4$   & $2.90 \times 10^4$ \\[.75 ex]
$\alpha_{m_1,I}$ & [copies/mL/hr]   & $\LTR_I \rightarrow \TAR$   & $2.80 \times 10^8$   & $7.54 \times 10^4$ \\[.75 ex]
$\alpha_{m_1,A}$ & [copies/mL/hr]   & $\LTR_A \rightarrow \TAR$   & $1.37 \times 10^7$   & $4.51 \times 10^5$ \\[.75 ex]
$\alpha_{m_2,I}$ & [copies/mL/hr]   & $\LTR_I \rightarrow \env_I$ & $3.63 \times 10^5$   & $8.13 \times 10^3$ \\[.75 ex]
$\alpha_{m_2,A}$ & [copies/mL/hr]   & $\LTR_A \rightarrow \env_I$ & $2.47 \times 10^6$   & $4.00 \times 10^4$ \\[.75 ex]
$\alpha_{p_1}$   & [$\Tat$ dc/mL/hr]  & $\TAR \rightarrow \Tat$  & $0.040\%$  & $0.038\%$ \\[.75 ex]
$\alpha_{p_2}$   & [$\Pr55$ dc/mL/hr] & $\TAR \rightarrow \Pr55$ & $0.154\%$  & $0.194\%$ \\[.75 ex]
$\alpha_{p_3}$   & [$p24$ dc/mL/hr]   & $\Pr55 \rightarrow \p24$  & $0.136\%$  & $0.081\%$ \\[.75 ex]
$\gamma_{p_1}$   & [Tat/mL/hr]      & $\Tat$ degradation   & $0^*$    & $0^*$ \\[.75 ex]
$\gamma_{p_2}$   & [$P_{24}$/mL/hr] & $p24$ degradation    & $0^*$    & $0^*$ \\[.75 ex]
$\gamma_{m,1}$   & [$\TAR$/mL/hr]   & $\TAR$ degradation   & $1.17 \times 10^4$    & $2.68 \times 10^4$ \\[.75 ex]
$\gamma_{m,2}$   & [$\env_I$/mL/hr] & $\env_I$ degradation & $2.24 \times 10^3$    & $5.91 \times 10^2$ \\[.75 ex]
$\Tat_{crit}$    & [Tat]            & switching limit      & $3001$ & $3001$ \\[.75 ex]
$v_a$            & unitless         & folds of increasing  & $150$  & $150$ \\ [.75 ex]
 \hline
 \end{tabular} \\
 \end{center}
\vspace{0.0in}
\caption{Table of parameters adapted from DeMarino \textit{et al}.~\citep{demarino2020differences}. 
Here, ``pc" - phosphorylation change, ``dc'' - densitometry change.
DeMarino and colleagues assumed the degradation rates for $\Tat$ and $\p24$ are taken to be $0^*$ for short period of time, but for our analytical purpose, we will assume them to be strictly positive.}
\label{table:parameters}
\end{table} 

One of the novel usages of this model is its ability to incorporate and study the effects of different drugs~\citep{demarino2020differences}. In the model, the parameter $w_i$ (i = 1,2,3,4,5) refers to the effect of different drugs on the transcriptional dynamics of HIV-1. The current form of the model shows a possibility of incorporating multiple drugs to study their effect in combination with each other ($w_2$ - cART, $w_3$ - IR, $w_4$ PMA/PHA); however, for this work, we will focus on a particular drug F07\#13.
The drug F07\#13, a Tat peptide mimetic, was developed to inhibit the transcription of HIV-1 virus by inhibiting Tat transactivation of the HIV-1 promoter, thereby encouraging the reverse direction from $\LTR_A$ to $\LTR_I$ and suppressing the activation of $\LTR_I$~\citep{lin2017inhibition,van2013effect}.
In the model, these effects are represented by the parameters $w_1$ and $w_5$, respectively.

The model is used with the following initial conditions: $\LTR_R =1$ (or $100\%$) and $\LTR_I = \LTR_A=0$, which means that we assume that all LTRs are in the repressed state initially.
Also, $\RNA_1 =0$, $\RNA_{2I}=0$, $\RNA_{2A}=0$, $\Tat =0$, $\Pr55=0$ and $\p24=0$ at $t=0$. 

\section{Dynamical system analysis}
\label{sec:analysis}
\subsection{Global stability in case of piecewise constant switching rate $f_m(\Tat)$ and constant activation rate $k_A$}
\label{ssec:analysis_linear_switching_case}

In DeMarino \textit{et al}.~\citep{demarino2020differences}, the following assumptions were made for activation rate $k_A$ and switching rate $f_m(\Tat)$:
\begin{eqnarray}
\label{cond1}
f_{m2}(\Tat) & = & 
\left\{ 
\begin{array}{ll}
\alpha_{m_2,A}/v_a & \mbox{if} \; \Tat < \Tat_{crit} \\
\alpha_{m_2,A} &  \mbox{if} \;  \Tat \ge \Tat_{crit}
\end{array} 
\right.\\
k_A & = & const
\label{cond2}
\end{eqnarray}

In this section we study stability of the corresponding linear dynamical system.

We can see that the system ~\eqref{eq:main}-\eqref{eq:main_end} allows for the zero steady state, which is always unstable assuming at least one of the LTRs is positive initially. This is due to the conservative property of LTRs, so if at least one of the LTR states starts out positive, all three states will be positive for all positive time, see Appendix~\ref{appendix:basic_properties}. 
This brings our focus to the the more interesting positive steady state of the system.
It is straightforward to show that all the eigenvalues are real for any values of the parameters, see Section~\ref{appendix:exact_sol_derived}. This motivates the following theorem.
\begin{theorem}
\label{theorem:gas_ori}
The system \eqref{eq:main}-\eqref{eq:main_end} under assumptions \eqref{cond1}-\eqref{cond2} has a unique asymptotically stable equilibrium, which is explicitly given in Appendix~\ref{appendix:steady_state_linear}.
\end{theorem}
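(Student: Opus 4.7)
The assumptions \eqref{cond1}-\eqref{cond2} make the whole system piecewise linear and, crucially, endow it with a triangular (cascade) structure. The plan is to exploit this by solving the cascade from the LTR block downward and stitching together the uniqueness and stability conclusions.

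First I would isolate the LTR subsystem \eqref{eq:main}-\eqref{eq:main3}. Since $k_A$ is constant and none of its coefficients depend on $\Tat$, $\env_I$, $\env_A$, $\TAR$, $\Pr55$ or $\p24$, this is an autonomous linear $3\times 3$ system. Its coefficient matrix has columns summing to zero, which encodes the conservation law $\LTR_R+\LTR_I+\LTR_A\equiv 1$ guaranteed by the chosen initial conditions and proved in Appendix \ref{appendix:basic_properties}. Thus $0$ is an eigenvalue with a positive Perron eigenvector, and restricting to the invariant simplex $\{\LTR_R+\LTR_I+\LTR_A=1\}$ reduces the dynamics to a $2\times 2$ linear system. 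Combining the fact that all eigenvalues are real (Appendix \ref{appendix:exact_sol_derived}) with Gershgorin's theorem, or alternatively with a direct trace-determinant sign check on the reduced matrix, I would verify that the two nonzero eigenvalues are strictly negative. The unique positive equilibrium is then the normalized Perron vector, and convergence to it is exponential.

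Next I would climb the cascade. Once the LTR block has settled, each downstream equation has the form $\dot X = -\lambda X + s(t)$ with $\lambda>0$ and with $s(t)$ converging exponentially to a constant $s^\ast$; here I use the convention stated in the caption of Table \ref{table:parameters} that $\gamma_{p_1}$ and $\gamma_{p_2}$ are taken strictly positive for analytical purposes, so every diagonal decay rate is positive. A standard variation-of-constants estimate then yields $X(t)\to s^\ast/\lambda$ exponentially. Applied successively along $\LTR_I\to\env_I\to\Tat$, $\LTR\to\TAR$, $\LTR_A\to\env_A$, $(\env_I,\env_A)\to\Pr55$, and $\Pr55\to\p24$, this produces unique, asymptotically stable equilibria for every remaining variable, matching the explicit formulas recorded in Appendix \ref{appendix:steady_state_linear}.

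The point that I expect to demand genuine care is the switching discontinuity in $f_{m_2}(\Tat)$. A priori, $\Tat(t)$ could cross the threshold $\Tat_{crit}$ many times along a transient, which raises the spectre of sliding modes or non-unique equilibria for $\env_A$. The resolution is structural: $\env_A$ does not feed back into $\Tat$ or into the LTR block, so $\Tat(t)$ is governed entirely by the unperturbed upstream cascade and converges to a specific limit $\Tat^{\ast}$ independent of the branch chosen for $f_{m_2}$. Except in the non-generic coincidence $\Tat^{\ast}=\Tat_{crit}$, which can be excluded by a small parameter perturbation or handled separately, $\Tat(t)$ eventually remains on one side of the threshold. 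From that time onward $f_{m_2}(\Tat(t))$ is a single constant, the $\env_A$ equation is a single linear ODE with a unique asymptotically stable equilibrium, and the cascade argument above closes the proof.
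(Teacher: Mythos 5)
Your proposal is correct and follows the same overall decomposition as the paper: solve the conserved $\LTR$ block first, then propagate convergence down the one-directional cascade to $\TAR$, $\env_I$, $\Tat$, $\env_A$, $\Pr55$, $\p24$. The difference is in the tools. For the $\LTR$ block the paper reduces to the plane and invokes the Bendixson--Dulac criterion together with the Poincar\'e--Bendixson theorem to exclude periodic orbits, whereas you observe that the reduced system is linear and simply check that the $2\times 2$ matrix has negative trace and positive determinant (with real eigenvalues from the appendix), so both nonzero eigenvalues are strictly negative; for a constant-coefficient linear system your route is more elementary and also yields the exponential rate directly. For the downstream variables the paper argues somewhat informally by substituting the $\LTR$ equilibria into the remaining equations ``in the limit $t\to\infty$''; your variation-of-constants estimate for $\dot X=-\lambda X+s(t)$ with $s(t)\to s^{\ast}$ exponentially is the rigorous version of that step and is worth having. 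Finally, you explicitly address the discontinuity in $f_{m_2}(\Tat)$ --- noting that $\env_A$ does not feed back into the $\Tat$ cascade, so $\Tat(t)$ converges to a limit independent of the branch and, except in the non-generic case $\Tat^{\ast}=\Tat_{crit}$, eventually stays on one side of the threshold, after which the $\env_A$ equation is a single linear ODE. The paper does not discuss this point; it only records that the equilibrium value $\env_A^{\ast}$ is a step function of $\Tat^{\ast}$, so your treatment closes a small gap in the published argument.
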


We prove stability in two steps. First, we decouple the system and use Bendixon-Dulac criterion to show that the isolated (decoupled) system of $\LTR$s ($\LTR_R, \LTR_I, \LTR_A$) has a unique positive steady state that is globally asymptotically stable. It then follows directly that the entire system shares the same property.

\begin{proposition} The system \eqref{eq:main}-\eqref{eq:main3} of $\LTR$ (i.e. $\LTR_R, \LTR_I, \LTR_A$) under assumptions \eqref{cond1}-\eqref{cond2} has a unique positive steady state that is globally asymptotically stable.
\end{proposition}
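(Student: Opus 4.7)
The plan is to exploit the conservation of total LTR mass and reduce the three-dimensional problem to a planar one where the Bendixson--Dulac criterion applies cleanly. Summing \eqref{eq:main}--\eqref{eq:main3} gives $\tfrac{d}{dt}(\LTR_R+\LTR_I+\LTR_A) = 0$, so together with the initial condition $\LTR_R(0) + \LTR_I(0) + \LTR_A(0) = 1$ the orbit is confined to the simplex
\[
S = \{(\LTR_R, \LTR_I, \LTR_A) : \LTR_R + \LTR_I + \LTR_A = 1,\ \LTR_R, \LTR_I, \LTR_A \geq 0\},
\]
whose positive invariance is recorded in Appendix~\ref{appendix:basic_properties}. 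Substituting $\LTR_R = 1 - \LTR_I - \LTR_A$ yields the planar system
\begin{align*}
\dot{\LTR_I} &= -\Bigl[\tfrac{w_3 w_4}{w_5} k_A + k_{OFF} + k_{ON}\Bigr]\LTR_I + (w_1 k_I - k_{ON})\LTR_A + k_{ON},\\
\dot{\LTR_A} &= \tfrac{w_3 w_4}{w_5} k_A\,\LTR_I - w_1 k_I\,\LTR_A,
\end{align*}
on the simply connected triangle $T = \{(\LTR_I, \LTR_A) : \LTR_I, \LTR_A \geq 0,\ \LTR_I + \LTR_A \leq 1\}$.

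Setting the right-hand sides to zero gives a $2\times 2$ linear system whose determinant equals $(k_{OFF}+k_{ON})\,w_1 k_I + \tfrac{w_3 w_4}{w_5} k_A\, k_{ON}$, a strictly positive quantity under the standing positivity of the rate constants. This produces a unique equilibrium, and a direct solve together with the substitution $\LTR_R^* = 1 - \LTR_I^* - \LTR_A^*$ confirms that all three coordinates are strictly positive. I would then rule out boundary equilibria by inspecting each edge: on $\LTR_A = 0$ the second equation forces $\LTR_I = 0$, and then $\dot{\LTR_I} = k_{ON} > 0$; analogously on $\LTR_I = 0$; and on $\LTR_I + \LTR_A = 1$ the first equation forces $\LTR_I = 0$, so $\LTR_A = 1$, giving $\dot{\LTR_A} = -w_1 k_I \neq 0$. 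Applying Bendixson--Dulac with Dulac function $B \equiv 1$, the divergence
\[
\frac{\partial \dot{\LTR_I}}{\partial \LTR_I} + \frac{\partial \dot{\LTR_A}}{\partial \LTR_A} = -\Bigl[\tfrac{w_3 w_4}{w_5} k_A + k_{OFF} + k_{ON} + w_1 k_I\Bigr]
\]
is a strictly negative constant on $T$, ruling out periodic orbits, homoclinic loops, and heteroclinic cycles. Poincar\'e--Bendixson then forces every $\omega$-limit set in the bounded, positively invariant region $T$ to consist of the unique interior equilibrium, yielding global asymptotic stability there; lifting back through the conservation constraint transfers the conclusion to the full three-dimensional LTR subsystem.

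The main obstacle I anticipate is purely the positivity bookkeeping rather than any difficult dynamical step: one must verify both that the equilibrium is strictly interior (not merely nonnegative) and that no extraneous boundary equilibria exist, so Poincar\'e--Bendixson genuinely pins every orbit to the interior point rather than leaving a boundary $\omega$-limit set to analyze separately. Once those sign checks are in place, the Dulac computation and the passage back to three dimensions are essentially automatic; as a sanity check, since the full LTR subsystem is linear under assumptions~\eqref{cond1}--\eqref{cond2}, the same conclusion can also be recovered from direct negativity of the two nontrivial eigenvalues of the associated matrix.
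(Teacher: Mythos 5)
Your proposal is correct and follows essentially the same route as the paper: reduce to the planar $(\LTR_I,\LTR_A)$ system via the conservation law, verify the unique positive equilibrium from the nullclines, apply Bendixson--Dulac with the identity Dulac function to get a strictly negative divergence, and conclude global asymptotic stability by Poincar\'e--Bendixson. Your additional bookkeeping on boundary equilibria and the remark that the eigenvalues of the linear system could be checked directly are welcome refinements but do not change the argument.
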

\begin{proof} First, note that $\frac{d}{dt} \left[ \LTR_R + \LTR_I + \LTR_A \right] = 0$, so the system of $\LTR$s decoupled from the rest of the equations has a conservation law and the boundedness of $\LTR_R, \LTR_I, \LTR_A$ follows immediately. We consider the reduced two-dimensional system:
\begin{eqnarray}
\frac{d}{dt} \left[ \LTR_I \right] & = & k_{ON} - (k_A + k_{OFF} + k_{ON}) \LTR_I + (k_I - k_{ON}) \LTR_A \\
\frac{d}{dt} \left[ \LTR_A \right] & = & k_A \LTR_I - k_I \LTR_A
\end{eqnarray}

The nullclines of $\LTR_I$ and $\LTR_A$ are lines that intersect in the first quadrant, thus the system has a unique positive fixed point. Let $f(\cdot) = \frac{d}{dt} \left[ \LTR_I \right]$ and $g(\cdot) = \frac{d}{dt} \left[ \LTR_A \right]$. Observe that
\begin{equation}
\frac{\partial f}{\partial (\LTR_I)} + \frac{\partial g}{\partial (\LTR_A)} 
= - (k_A + k_{OFF} + k_{ON}) - k_I < 0. 
\end{equation}
By Bendixson-Dulac criterion, the system does not have a periodic orbit. Therefore the unique positive steady state is globally asymptotically stable, by Poincar\'e - Bendixson theorem. Furthermore, the unique positive steady state of the $\LTR$ system takes the forms
\begin{eqnarray}
\LTR_I^* & = & \frac{k_{ON} k_I}{(k_{OFF} + k_{ON}) k_I + k_{ON} k_A} \\
\LTR_A^* & = & \frac{k_{ON} k_A}{(k_{OFF} + k_{ON}) k_I + k_{ON} k_A} \\
\LTR_R^* & = & 1 - \LTR_I^* - \LTR_A^*
\end{eqnarray}

Stability of the original system follows by simply substituting equilibrium values of $\LTR$s into the rest of the equations. For instance, consider the rate equation for $\TAR$. Let $N_\LTR$ denote $\alpha_{m_1,R}\LTR_R^*+\alpha_{m_1,I}\LTR_I^* + \alpha_{m_1,A}\LTR_A^*$. Then in the limit as $t\rightarrow\infty$, the rate of change of $\TAR$ becomes:
\begin{equation}
    \frac{d}{dt} (\TAR) = N_{\LTR} - \gamma_{m_1} \TAR
\end{equation}
This implies the unique positive steady state $\TAR^* = N_{\LTR}/\gamma_{m_1}$ is globally stable. Similar argument holds for the remaining variables. Note that the steady state of $\env_A$ takes the form of a step function
\begin{eqnarray}
\env_A^* = 
\left\{ 
\begin{array}{ll}
\frac{\alpha_{m_2,A} \LTR_A^*}{v_a (\gamma_{m_2} + \alpha_{p_2}) } & \mbox{if} \; \Tat^* < \Tat_{crit} \\
\frac{\alpha_{m_2,A} \LTR_A^*}{\gamma_{m_2} + \alpha_{p_2} } &  \mbox{if} \;  \Tat^* \ge \Tat_{crit}
\end{array},
\right.
\end{eqnarray}
where $\Tat^* = (\alpha_{p_1} \env_I^*)/\gamma_{p_1}$.
\end{proof}
The existence of the globally asymptotically stable positive steady state of the system implies that the amount of virus would increase to a stable level once the activation of HIV-1 takes place.
However, such biological limit is difficult to meaningfully incorporate into any modeling schemes. Thus, this result should be interpreted as a \textit{stable} increase in the amount of virus after the activation of HIV-1.
In Appendix ~\ref{appendix:exact_sol_derived} we derive the complete closed form explicit steady state solution. In the case of this linear system we are able to describe dependence of the steady state on each of the parameters; however, a detailed examination of the sensitivity to each of the parameters is outside of the scope of this work. 
An advantage of having an explicit solution is the ability to estimate the rate of decay for each of the system variables using corresponding eigenvalues of the Jacobi matrix. 
This could prove useful for further analysis of the system and its potential modifications.

\subsection{Stability in case of $k_A$ and $f_m$ continuously depending on $\Tat$}
\label{ssec:Tat-dependent}
The model described above contains discontinuities due to the switching of the $f_m$ regimes based on $\Tat$ level, which may not be biologically valid. 
Furthermore, the switching function limits the ability of the model to distinguish the transcriptional dynamics as $\Tat$ approaches the critical threshold $\Tat_{crit}$ for T-cells and macrophages.
In what follows, we introduce a continuous version of the existing model by modifying the switches to the following form of Hill function:
\begin{eqnarray}
\label{eq:hill_func}
f_{m_2} (\Tat) & = & \frac{\alpha_{m_2,A}}{v_a} \frac{1 + v_a (Tat/T_c)^n}{1 + (Tat/T_c)^n}, \quad v_a>1 \\
\label{eq:hill_func_2}
k_A (\Tat) & = & \frac{\beta_{m_2,A}}{v_b} \frac{1 + v_b (Tat/T_c)^m}{1+ (Tat/T_c)^m}, \quad v_b>1 
\end{eqnarray}
Here $T_c$ denotes $\Tat_{crit}$. Figure~\ref{fig:hill_function} shows that  $f_m$ converges to the Heaviside step function as $n$ increases, asymptotycally approaching the form considered in the previous section, to characterize the $\Tat$-dependent rates in the transcription process. Not to impose additional assumptions, we reserve two different Hill constants $n$ and $m$ for $f_{m_2}(\Tat)$ and $k_A(\Tat)$. While $n$ often takes value between 2 and 3 in literature, or 1 in \cite{chavali2015distinct}, the possible biologically relevant ranges of $n$ and $m$ are all real numbers greater than or equal to 1.
\begin{figure}
    \centering
    \includegraphics[width=0.8\linewidth]{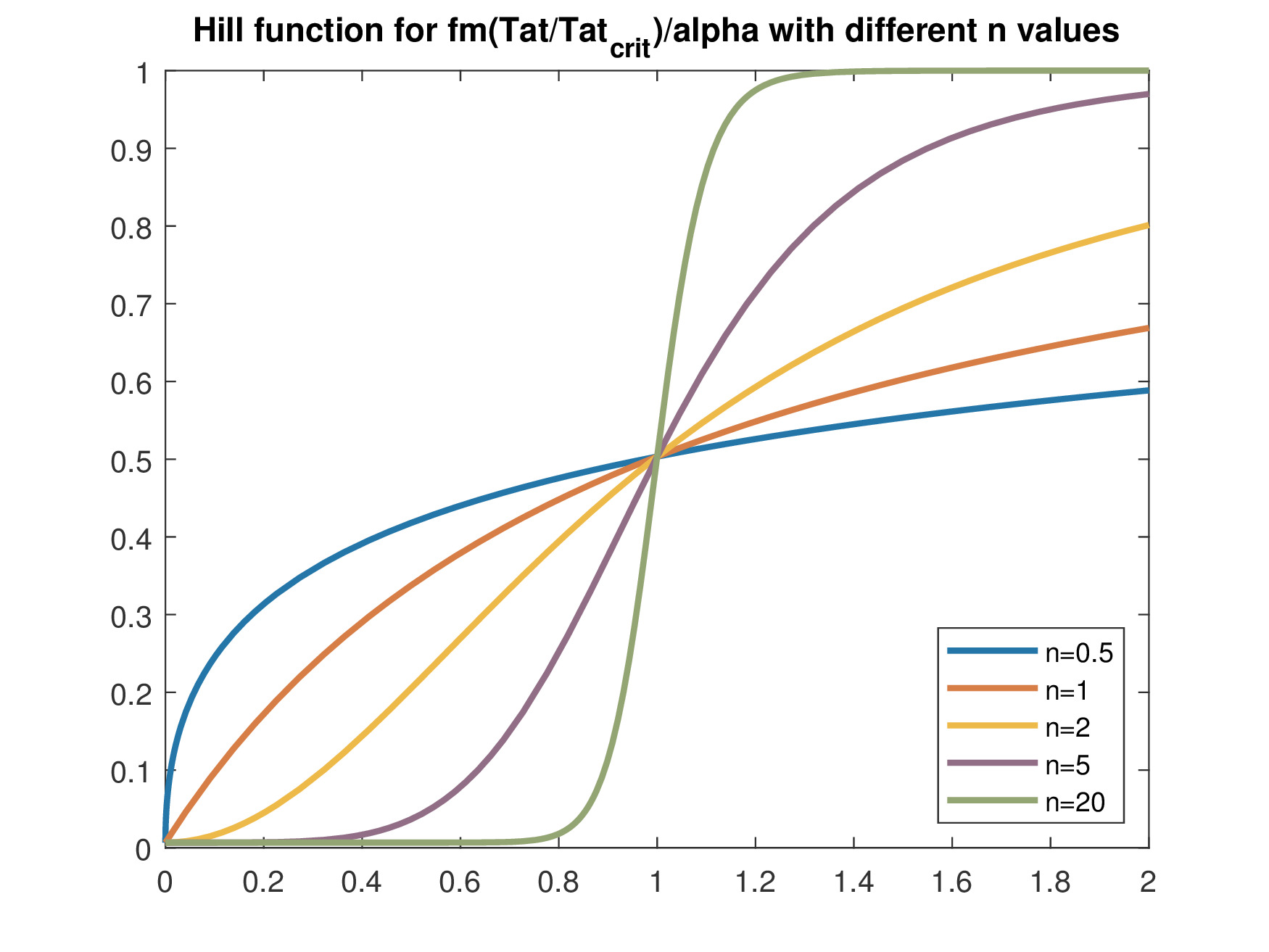}
    \caption{\footnotesize Hill function in Equation~\eqref{eq:hill_func} for different values of $n$. A similar observation is seen for the Hill function in Equation~\eqref{eq:hill_func_2}.}
    \label{fig:hill_function}
\end{figure}

We start examining the properties of the modified system by looking at its positive invariance. This is in line with our previous analysis since both functional responses are positive and bounded.
\begin{lemma} The system \eqref{eq:main}-\eqref{eq:main_end} under assumptions \eqref{eq:hill_func}-\eqref{eq:hill_func_2} is positively invariant.
\end{lemma}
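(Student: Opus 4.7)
The plan is to establish positive invariance of the nonnegative orthant $\mathbb{R}^9_{\geq 0}$ in the state variables $(\LTR_R,\LTR_I,\LTR_A,\Tat,\TAR,\env_I,\env_A,\Pr55,\p24)$ by a direct sub-tangential (Nagumo-type) argument: on each coordinate hyperplane $\{x_i = 0\}$ intersected with the closed nonnegative orthant, it suffices to check that $\dot{x}_i \geq 0$. Since the right-hand side of \eqref{eq:main}--\eqref{eq:main_end} is locally Lipschitz under \eqref{eq:hill_func}--\eqref{eq:hill_func_2} (the Hill functions are smooth for $\Tat \geq 0$ whenever $n,m \geq 1$ and $T_c>0$), standard ODE theory then yields invariance of the orthant, and the conservation law $\LTR_R + \LTR_I + \LTR_A \equiv 1$ confines the $\LTR$ components to the standard 2-simplex.

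The key observation that makes the checks routine is that both $\Tat$-dependent rates are nonnegative on $\{\Tat \geq 0\}$. Indeed, from \eqref{eq:hill_func}--\eqref{eq:hill_func_2} with $v_a,v_b>1$ and $\alpha_{m_2,A},\beta_{m_2,A}>0$, one has
\begin{equation}
0 \;<\; \frac{\alpha_{m_2,A}}{v_a} \;\leq\; f_{m_2}(\Tat) \;\leq\; \alpha_{m_2,A}, \qquad 0 \;<\; \frac{\beta_{m_2,A}}{v_b} \;\leq\; k_A(\Tat) \;\leq\; \beta_{m_2,A},
\end{equation}
for all $\Tat \geq 0$. Combined with the assumed positivity of every $\alpha$, $\gamma$, $k$ and $w$ constant, this means every inflow term in the system is a product or sum of nonnegative quantities whenever the state lies in $\mathbb{R}^9_{\geq 0}$.

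I would then run through the nine boundary checks one by one, each a one-line verification. For example: on $\{\LTR_A = 0\}$, $\dot{\LTR}_A = (w_3 w_4 / w_5)\, k_A(\Tat)\,\LTR_I \geq 0$; on $\{\env_A = 0\}$, $\dot{\env}_A = f_{m_2}(\Tat)\,\LTR_A \geq 0$; on $\{\Tat = 0\}$, $\dot{\Tat} = \alpha_{p_1} \env_I \geq 0$; and analogously the remaining six components ($\LTR_R,\LTR_I,\TAR,\env_I,\Pr55,\p24$) each have a nonnegative inflow on their respective zero-faces because the only outflow term is proportional to the variable itself and therefore vanishes on that face. Invoking Nagumo's theorem then gives positive invariance of $\mathbb{R}^9_{\geq 0}$.

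There is no real obstacle here; the modifications introduced in \eqref{eq:hill_func}--\eqref{eq:hill_func_2} were expressly designed to preserve positivity, and the argument is the standard one used for ODE models of biochemical reaction networks. The only thing worth flagging explicitly is that the conclusion relies on positivity \emph{and} boundedness of $f_{m_2}$ and $k_A$, so that the vector field remains Lipschitz up to the boundary $\Tat = 0$; this is why the hypotheses $v_a, v_b > 1$ and $n,m \geq 1$ matter and should be stated at the top of the proof.
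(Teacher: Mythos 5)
Your proposal is correct and follows essentially the same route as the paper: the paper's one-line proof appeals to the boundedness and positivity of $f_{m_2}$ and $k_A$ together with the structural fact (recorded in its Appendix on basic properties) that every negative term in each rate equation is proportional to the variable itself, which is precisely the quasi-positivity you verify face by face. Your Nagumo-style write-up simply makes the paper's terse argument explicit and somewhat more rigorous, but it is not a different method.
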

\begin{proof}
 Since $f_{m_2}(\Tat)$ and $k_A(\Tat)$ are bounded above by $\alpha_{m_2,A}$ and $\beta_{m_2,A}$ and $v_a$ and $v_b$ are taken to be strictly greater than 1, the positive invariance of the new system follows directly from the boundedness of the original system.
\end{proof}
The introduction of the continuous functional responses allow for more interesting coupling of the dynamics between different variables; however, the overall dynamics of the system still does not rely on TAR, Pr55 and p24, since they only feed on the other six variables.
Additionally, within the remaining six equations, $\env_A$ does not contribute to the dynamics of the other five and $\LTR$ is conserved. 
Thus, we start our analysis on the reduced system of four differential equations \eqref{eq: LTR_I},\eqref{eq: LTR_A},\eqref{eq: Tat},\eqref{eq: env_I}.

Define $x = \LTR_I, y = \LTR_A, s = \env_I, v = \Tat, a_1 = k_{ON}, a_2 = k_I - k_{ON}, a_3 = k_{ON} + k_{OFF}, a_4 = k_I, a_5 = \alpha_{p_1}, a_6 = \gamma_{p_1}, a_7 = \alpha_{m_2,I}, a_8 = \gamma_{m_2} + \alpha_{p_1} + \alpha_{p_2}, a_9 = \beta_{m_2,A}/v_b$, $a_{10} = v_b/(T_c)^n$ and $a_{11} = 1/(T_c)^n$. Note that $a_{10} = v_b a_{11}$, so since we take $v_b$ to be strictly larger than 1, $a_{10} > a_{11}$.
In these notations, the system takes on the form:

\begin{eqnarray}
\label{eq:reduced_1}
x' & = & a_1 + (a_4 - a_1) y - \left( a_3 + a_9 \frac{1 + a_{10} v^n}{1 + a_{11} v^n} \right) x \\
y' & = & a_9 \frac{1 + a_{10} v^n}{1 + a_{11} v^n} x - a_4 y\\
s' & = & a_7 x - a_8 s\\
\label{eq:reduced_2}
v' & = & a_5 s - a_6 v.
\end{eqnarray}
As with the linear system, the nonlinear system also contains a zero steady state that is always unstable whenever at least one of the initial conditions for LTR is positive. Thus, we focus our analysis on the positive steady state. Note that for $n < 1$, the Hill function exhibits dynamics that are unexpected for our biological system, see Figure~\ref{fig:hill_function}, so we discard that case.
\begin{proposition}
 The reduced system, equation~\eqref{eq:reduced_1}-\eqref{eq:reduced_2}, has a unique positive steady state for all $n \geq 1$.
\end{proposition}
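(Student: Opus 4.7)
My plan is to collapse the four steady-state equations into a single scalar fixed-point problem in $v^* = \Tat^*$ and then use a monotonicity argument to count its roots. First I would solve $s'=v'=y'=0$ sequentially in terms of $x^*$: setting $s^* = (a_7/a_8)\,x^*$, $v^* = c\,x^*$ with $c := a_5 a_7/(a_6 a_8)$, and $y^* = (K(v^*)/a_4)\,x^*$, where $K(v) := a_9\,(1+a_{10}v^n)/(1+a_{11}v^n)$ denotes the Hill factor appearing in \eqref{eq:reduced_1}--\eqref{eq:reduced_2}. Substituting $y^*$ into the steady-state form of \eqref{eq:reduced_1} and solving the resulting linear equation for $x^*$ gives $x^* = a_1 a_4/[a_3 a_4 + a_1 K(v^*)]$. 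Combining with $v^* = c\,x^*$ produces the self-consistency equation
\[
v^* \;=\; F(v^*), \qquad F(v) \;:=\; \frac{c\,a_1 a_4}{a_3 a_4 + a_1\,K(v)}.
\]

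Next I would establish monotonicity. A direct computation yields
\[
K'(v) \;=\; a_9\,\frac{n\,v^{n-1}\,(a_{10}-a_{11})}{(1+a_{11}v^n)^2}.
\]
Because $a_{10} = v_b\,a_{11}$ with $v_b>1$, we have $a_{10}>a_{11}$, so $K'(v)>0$ for all $v>0$ and every $n\ge 1$; in fact $K$ increases strictly from $K(0)=a_9$ to the finite limit $a_9 v_b$. Hence $F$ is continuous, strictly positive, bounded, and strictly decreasing on $[0,\infty)$. Defining $G(v) := v - F(v)$, we then have $G'(v)>0$ everywhere, $G(0) = -F(0)<0$, and $G(v)\to\infty$ as $v\to\infty$, so by the intermediate value theorem combined with strict monotonicity there exists a unique $v^*\in(0,\infty)$ solving $G(v^*)=0$. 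The substitutions above then recover unique positive $x^*$, $y^*$, and $s^*$. As an additional biological check on $\LTR_R^* = 1 - x^* - y^*$, I would observe that $x^* + y^* = a_1(a_4+K(v^*))/[a_3 a_4 + a_1 K(v^*)] \le 1$ since $a_3 = k_{ON}+k_{OFF} \ge k_{ON} = a_1$.

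Conceptually this is a routine reduction and I do not anticipate a serious obstacle. The two points that deserve care are (i) verifying that $K$ is genuinely monotone, which hinges on $a_{10}>a_{11}$ (equivalently $v_b>1$) and holds uniformly for all admissible Hill exponents $n\ge 1$, and (ii) ensuring that each intermediate substitution yields a unique branch, which is automatic because every step is an explicit linear solve. No topological or degree-theoretic machinery is needed; uniqueness of the positive equilibrium follows purely from the sign of $G'$.
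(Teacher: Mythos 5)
Your proposal is correct and follows essentially the same route as the paper: the identical sequential elimination of $s^*$, $v^*$, and $y^*$ reduces the equilibrium conditions to one scalar equation, which you phrase as a fixed point $v^*=F(v^*)$ of a strictly decreasing map while the paper writes the equivalent equation in $x^*$ as the unique intersection on $(0,1)$ of the increasing curve $f$ with the decreasing line $g$, both resolved by the same monotonicity-plus-intermediate-value argument. Only the change of variable differs; your added verification that $x^*+y^*\le 1$ (so $\LTR_R^*\ge 0$) is a small bonus the paper does not state explicitly.
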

\begin{proof} 
 In order to show the system has a unique positive steady state, we proceed by finding the nullclines. Setting $v' = 0$ and $s' = 0$, we obtain:
\begin{eqnarray}
x^* & = & \frac{a_8}{a_7} s^*\\
s^* & = &\frac{a_6}{a_5} v^*.
\end{eqnarray}
Together, this gives:
\begin{eqnarray}
v^* & = & \frac{a_5 a_7}{a_6 a_8} x^* =: \bar{w} x^*,
\end{eqnarray}
where $\bar{w} = \frac{a_5 a_7}{a_6 a_8}$. Next we set $y' = 0$ and solve for $y^*$ in term of $x^*$ to obtain:
\begin{eqnarray}
y^* & = & \frac{a_9}{a_4} \frac{1 + a_{10} (v^*)^n}{1 + a_{11} (v^*)^n} x^*\\
  & = & \frac{a_9}{a_4} \frac{1 + a_{10} (\bar{w} x^*)^n}{1 + a_{11} (\bar{w} x^*)^n} x^*.
\end{eqnarray}
Finally, setting $x' = 0$ and replace in $y^*$.
\begin{eqnarray}
a_1 + \frac{a_9}{a_4}(a_4 - a_1) \frac{1 + a_{10} (\bar{w} x^*)^n}{1 + a_{11} (\bar{w} x^*)^n} x^* \nonumber\\
- \left( a_3 + a_9 \frac{1 + a_{10} (\bar{w} x^*)^n}{1 + a_{11} (\bar{w} x^*)^n} \right) x^* & = & 0.
\end{eqnarray}
Rearranging terms, we obtain:
\begin{eqnarray}
a_1 - a_3 x^* & = & \frac{a_1 a_9}{a_4} \left( \frac{1 + a_{10} (\bar{w} x^*)^n}{1 + a_{11} (\bar{w} x^*)^n} \right) x^*.
\end{eqnarray}
Letting $g(x^*) := a_1 - a_3 x^*$ and $f(x^*) := \frac{a_1 a_9}{a_4} \left( \frac{1 + a_{10} (\bar{w} x^*)^n}{1 + a_{11} (\bar{w} x^*)^n} \right) x^*$. Note that both $f(x^*)$ and $g(x^*)$ are continuous and strictly monotone functions on $x^* \in [0, 1]$. Furthermore, $f(0) = 0$ and $f(1) > 0$, while $g(0) > 0$ and $g(1) = a_1 - a_3 = - k_{OFF} < 0$. Thus invoking the intermediate value theorem, we have $f(x^*)$ and $g(x^*)$ intersect at a unique point $x^* \in (0,1)$. It follows immediately the system has a unique positive equilibrium $(x^*, y^*, s^*, v^*)$.
\end{proof}

The positivity of the unique nontrivial steady state of our system helps establish its biological validity. However, it does not rule out the possibility of finding a steady state arbitrarily close to 0, which is unrealistic in practice when HIV-1 viral load stays low but away from 0. For this reason, we establish a proposition that establishes lower bounds on all system variables.

Recall that we say the system described in Equations (\ref{eq:reduced_1}-\ref{eq:reduced_2}) is permanent 
if there are positive constant M and N such that
\begin{equation}
    \limsup_{t\rightarrow\infty} \max\{x(t),y(t),s(t),v(t)\} < M,
\end{equation}
and if
\begin{equation}
    \liminf_{t\rightarrow\infty} \min\{x(t),y(t),s(t),v(t)\} > N.
\end{equation}

We will show that the system in Equations (\ref{eq:reduced_1}-\ref{eq:reduced_2}) is permanent in the above sense for $n\geq1$. By construction, $x$ and $y$ are bounded above by 1, so we only need to show they also have a positive lower bound.

\begin{proposition}
\label{prop:lower_bound_x_y}
There exist positive constants $m_x$ and $m_y$ such that $0 < m_x \leq \liminf_{t\rightarrow\infty} x$ and $0 < m_y \leq \liminf_{t\rightarrow\infty} y$. 
\end{proposition}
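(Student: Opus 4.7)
The plan is to exploit the two-sided boundedness of the Hill-type coefficient
\[
H(v) := \frac{1 + a_{10} v^n}{1 + a_{11} v^n}
\]
and then derive each lower bound from a scalar differential inequality solvable by the standard linear comparison principle. Since $a_{10} = v_b a_{11}$ with $v_b > 1$, and $v(t) \geq 0$ by the positive-invariance lemma, one has the uniform estimate $1 \leq H(v(t)) \leq v_b$. Combined with the $\LTR$ conservation law $\LTR_R + \LTR_I + \LTR_A = 1$, which confines $y(t) \in [0,1]$, these two bounds split the argument cleanly into a step for $x$ and then a step for $y$.

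For the $x$ bound, I would rewrite the right-hand side of the $x$-equation in the form
\[
x'(t) = k_{ON}\bigl(1 - y(t)\bigr) + k_I\, y(t) - \bigl(a_3 + a_9 H(v(t))\bigr)\,x(t),
\]
and observe that, since $y \in [0,1]$, the source term $k_{ON}(1-y) + k_I y$ is a convex combination of $k_{ON}$ and $k_I$ and is therefore bounded below by $\min(k_{ON}, k_I)$. Using the upper Hill bound $a_9 H(v) \leq a_9 v_b$ yields the autonomous scalar inequality
\[
x'(t) \;\geq\; \min(k_{ON}, k_I) - (a_3 + a_9 v_b)\,x(t),
\]
and a direct linear comparison gives
\[
\liminf_{t \to \infty} x(t) \;\geq\; m_x := \frac{\min(k_{ON}, k_I)}{a_3 + a_9 v_b},
\]
which is strictly positive under the standing biological assumption $k_I > 0$; if $k_I = 0$, then $\LTR_A$ is absorbing and $x(t) \to 0$, so the statement is only meaningful in the case $k_I > 0$.

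For the $y$ bound, I would use the lower Hill bound $H(v) \geq 1$ to obtain $y' \geq a_9 x - k_I y$. Given any $\epsilon \in (0, m_x)$, choose $T_\epsilon$ so that $x(t) \geq m_x - \epsilon$ for all $t \geq T_\epsilon$, and compare on $[T_\epsilon, \infty)$ with the scalar ODE $z' = a_9(m_x - \epsilon) - k_I z$. This yields $\liminf_{t \to \infty} y(t) \geq a_9(m_x - \epsilon)/k_I$, and letting $\epsilon \downarrow 0$ produces
\[
\liminf_{t \to \infty} y(t) \;\geq\; m_y := \frac{a_9 m_x}{k_I} \;>\; 0.
\]

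The main obstacle is really one of ordering rather than depth: because the $y$-equation depends on $x$ while the $x$-equation depends on $v$ only through the uniformly two-sided-bounded quantity $H(v)$, one must bound $x$ first and $y$ second, never the other way around; a direct attempt at $y$ would be circular, as $H(v)$ carries no useful lower estimate expressible purely in terms of $y$. The convex-combination rewriting of $a_1 + (a_4 - a_1) y$ cleanly absorbs the case split $k_I \gtrless k_{ON}$ that would otherwise appear when controlling the sign of $(a_4 - a_1) y$, and no information about the detailed dynamics of $s$ or $v$ is needed at this stage beyond positivity.
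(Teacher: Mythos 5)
Your proposal is correct and follows essentially the same route as the paper's proof: bound the Hill factor by $1 \leq H(v) \leq v_b$, bound the source term $a_1 + (a_4-a_1)y$ below by $\min(a_1,a_4)$ using $y\in[0,1]$, compare $x$ with the linear scalar ODE to get $m_x = \min(a_1,a_4)/(a_3+a_9 v_b)$, and then feed $m_x$ into the $y$-equation to get $m_y = a_9 m_x/a_4$. Your $\epsilon$-shift when passing the eventual bound on $x$ into the $y$-comparison is in fact slightly more careful than the paper's version, and your remark that the bound degenerates when $k_I=0$ is a fair caveat given the standing assumption of strictly positive parameters.
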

\begin{proof} First we note that all variables in Equations (\ref{eq:reduced_1}-\ref{eq:reduced_2}) are non-negative. 
Thus, for all non-negative values of $x, y, v$, we consider a new variable $\underbar{X}(t)$ such that $\underbar{X}(0) = x(0)$ and
\begin{eqnarray}
    \underbar{X}' & = & \min\left\{a_1,a_4\right\} - \left( a_3 + a_9 v_b \right) \underbar{X}.
\end{eqnarray}
Consider $x'  =  a_1 + (a_4 - a_1) y - \left( a_3 + a_9 \frac{1 + a_{10} v^n}{1 + a_{11} v^n}\right)x$.
Observe that if $a_4>a_1$, then $\min\{a_1 + (a_4 - a_1)y\} \geq a_1$, with minimum achieved when $y=0$. 
Otherwise, $a_4<a_1$, then $\min\{a_1 + (a_4-a_1)y\} \geq a_4$ with minimum achieved when $y=1$. 
Furthermore, observe that
\begin{eqnarray}
 a_3 + a_9 \frac{1 + a_{10} v^n}{1 + a_{11}v^n} & \leq & 
 a_3 + a_9 \max\left\{\frac{a_{10}}{a_{11}},1\right\} \\
 & = & a_3 + a_9 \max\{v_b,1\} \\
 & = & a_3 + a_9 v_b,
\end{eqnarray}
where the last equality follows from the assumption that the amplification effect due to activation, $v_b$, is strictly greater than 1.

Now, we claim that $x(t) \geq \underbar{X}(t)$ for all $t\geq 0$. If not, then since $x(t)$ and $\underbar{X}(t)$ are non-negative, $x(0) = \underbar{X}(0)$ and $x'(0) \geq \underbar{X}'(0)$, there exists $t_1 > 0$ such that $x(t) \geq \underbar{X}(t)$ for $t \in [0,t_1)$ and $x(t_1)= \underbar{X}(t_1)$ with $x'(t_1) < \underbar{X}'(t_1)$. However, we note that
\begin{eqnarray}
    x'(t_1) & = & a_1 + (a_4 - a_1) y(t_1) - \left( a_3 + a_9 \frac{1 + a_{10} v^n(t_1)}{1 + a_{11} v^n(t_1)}\right)x(t_1)\\
         & \geq & \min\left\{a_1,a_4\right\} - \left( a_3 + a_9 v_b \right) x(t_1) \\
            & = & \min\left\{a_1,a_4\right\} - \left( a_3 + a_9 v_b \right) \underbar{X}(t_1)\\
            & = & \underbar{X}'(t_1),
\end{eqnarray}
which is a contradiction. Hence, $x(t) \geq \underbar{X}(t)$ for all $t\geq 0$.
Observe that since $\lim_{t\rightarrow\infty} \underbar{X}(t) = \frac{\min\{a_1,a_4\}}{ a_3 + a_9 v_b }$, this implies 
\begin{eqnarray}
\liminf_{t\rightarrow\infty} x(t) \geq \lim_{t\rightarrow\infty} \underbar{X}(t) = \frac{\min\{a_1,a_4\}}{ a_3 + a_9 \max\{a_{10},1\} } > 0 .
\end{eqnarray}
Define $m_x = \frac{\min\{a_1,a_4\}}{a_3 +a_9 v_b}$, then eventually $x(t) \geq m_x > 0$. 

%

Consider $\underbar{Y}(t)$ such that $\underbar{Y}(0) = y(0)$ and
\begin{eqnarray}
\underbar{Y}' & = & a_9 m_x - a_4 \underbar{Y}.
\end{eqnarray}
Similarly, we obtain that $y(t) \geq \underbar{Y}(t)$ for $t\geq 0$ and
\begin{eqnarray}
\liminf_{t\rightarrow\infty} y(t) \geq \lim_{t\rightarrow\infty} \underbar{Y}(t) = \frac{a_9 m_x}{a_4} > 0.
\end{eqnarray}
Define $m_y = \frac{a_9 m_x}{a_4}$. This concludes our proof.

%
\end{proof}
From Proposition~\ref{prop:lower_bound_x_y}, it is straightforward to show that $s$ and $v$ also have positive lower and upper bound. Thus we state the following Lemma without proof.
\begin{lemma}
\label{lem:permanent}
The system in Equations (\ref{eq:reduced_1}-\ref{eq:reduced_2}) is permanent.
\end{lemma}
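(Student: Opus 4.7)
The plan is to combine the bounds already established in Proposition~\ref{prop:lower_bound_x_y} with standard comparison arguments applied to the two remaining, linear equations for $s$ and $v$. Since $x$ and $y$ represent fractions of the conserved LTR pool, we immediately have $x(t),y(t)\in[0,1]$, so the uniform upper bounds on these two variables are automatic. The work is therefore confined to producing upper bounds on $s$ and $v$, and lower bounds on $s$ and $v$.

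For the upper bounds, I would fix any $\varepsilon>0$ and use $x(t)\le 1$ in equation~\eqref{eq:reduced_2}'s predecessor to get the differential inequality $s'\le a_7 - a_8 s$. A standard comparison with the scalar linear ODE $\bar S'=a_7-a_8\bar S$, whose solution tends to $a_7/a_8$, yields
\begin{equation}
\limsup_{t\to\infty} s(t)\le \frac{a_7}{a_8}=:M_s.
\end{equation}
Feeding this into the $v$ equation via $v'\le a_5(M_s+\varepsilon)-a_6 v$ for $t$ large and letting $\varepsilon\to 0$ gives
\begin{equation}
\limsup_{t\to\infty} v(t)\le \frac{a_5 a_7}{a_6 a_8}=:M_v.
\end{equation}

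For the lower bounds, I invoke Proposition~\ref{prop:lower_bound_x_y}, which guarantees that eventually $x(t)\ge m_x>0$. Then $s'\ge a_7 m_x - a_8 s$ for large $t$, and comparison with the scalar ODE $\underbar{S}'=a_7 m_x - a_8\underbar{S}$ gives
\begin{equation}
\liminf_{t\to\infty} s(t)\ge \frac{a_7 m_x}{a_8}=:m_s>0.
\end{equation}
Repeating the argument one level up, $v'\ge a_5 m_s - a_6 v$ eventually, hence
\begin{equation}
\liminf_{t\to\infty} v(t)\ge \frac{a_5 a_7 m_x}{a_6 a_8}=:m_v>0.
\end{equation}
Setting $M=\max\{1,M_s,M_v\}$ and $N=\min\{m_x,m_y,m_s,m_v\}$ gives the permanence constants required by the definition.

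I do not expect any substantive obstacle: all four one-sided comparisons reduce to scalar linear ODEs with positive asymptotic limits, and the only delicate step, the lower bound on $x$ (hence on $y$ and transitively on $s$ and $v$), has already been handled in Proposition~\ref{prop:lower_bound_x_y} via the construction of the subsolution $\underbar{X}$. The minor care points are simply justifying the ``eventually'' substitutions (i.e.\ choosing $T$ large enough that $x(t)\ge m_x$ and $s(t)\ge m_s$ hold for all $t\ge T$ before invoking each comparison) and ensuring the $\varepsilon$-slack arguments in the upper-bound step can be closed by taking $\varepsilon\to 0^+$, both of which are routine.
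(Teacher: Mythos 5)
Your proposal is correct and is precisely the argument the paper has in mind: the paper states Lemma~\ref{lem:permanent} without proof, remarking only that the bounds on $s$ and $v$ follow ``straightforwardly'' from Proposition~\ref{prop:lower_bound_x_y} via exactly the scalar comparison arguments you carry out. The only cosmetic adjustment is that, since the definition of permanence requires strict inequalities, you should take $M$ strictly larger than $\max\{1,M_s,M_v\}$ and $N$ strictly smaller than $\min\{m_x,m_y,m_s,m_v\}$ (e.g.\ $M+1$ and $N/2$).
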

Note that Lemma~\ref{lem:permanent} also guarantees that the nonlinear system with $k_A(\Tat)$ and $f_m(\Tat)$ continuously dependent on Tat is also permanent.

\subsection{Two alternative 3-dimensional approximations}
\label{ssec:reduced_models}
%
Even with the reduction, the asymptotic dynamics of the reduced system ~\eqref{eq:reduced_1}-\eqref{eq:reduced_2} is still difficult to study.
Thus, we examine two alternative models that capture the asymptotic behavior of the reduced system. 

First, since we often observe that $\alpha_{m_2,I} \gg \gamma_{m_2} + \alpha_{p_1} + \alpha_{p_2}$ (see Table~\ref{table:nonlinear_params}), it implies $a_7 \gg a_8$.
Thus, one may consider the quasi-steady state (QSS) approximation that $s(t) \approx \frac{a_7}{a_8}x(t)$ or $\env_I \approx \frac{a_7}{a_8}\LTR_I$.

This gives the following {\bf QSS system}:
\begin{eqnarray}
\label{eq: qss1}
x' & = & a_1 + (a_4 - a_1) y - \left( a_3 + a_9 \frac{1 + a_{10} v^n}{1 + a_{11} v^n} \right) x \\
y' & = & a_9 \frac{1 + a_{10} v^n}{1 + a_{11} v^n} x - a_4 y\\
v' & = & a_5 \frac{a_7}{a_8}x - a_6 v.
\label{eq: qss3}
\end{eqnarray}

Alternatively, one can treat the $s$ compartment as a delay factor in the link between $x$ and $v$. In other words, we assume $s(t) \approx \frac{a_7}{a_8}x(t-\tau)$, where $\tau$ is a pre-determined time delay ($\tau \approx 1/a_8$). This leads to the following {\bf Delay system}:
\begin{eqnarray}
\label{eq: dm1}
x' & = & a_1 + (a_4 - a_1) y - \left( a_3 + a_9 \frac{1 + a_{10} v^n}{1 + a_{11} v^n} \right) x \\
y' & = & a_9 \frac{1 + a_{10} v^n}{1 + a_{11} v^n} x - a_4 y\\
v' & = & a_5 \frac{a_7}{a_8}x(t-\tau) - a_6 v.
\label{eq: dm3}
\end{eqnarray}
In Figure~\ref{fig:Model comparison T-cells} we provide computational comparison of these two models against the original one. Both approximations are able to capture asymptotic behavior of the system, but there are noticeable differences in transient dynamics, as expected. 

%
        
%
\begin{figure}[!ht]
		\centering
\includegraphics[width=1\linewidth]{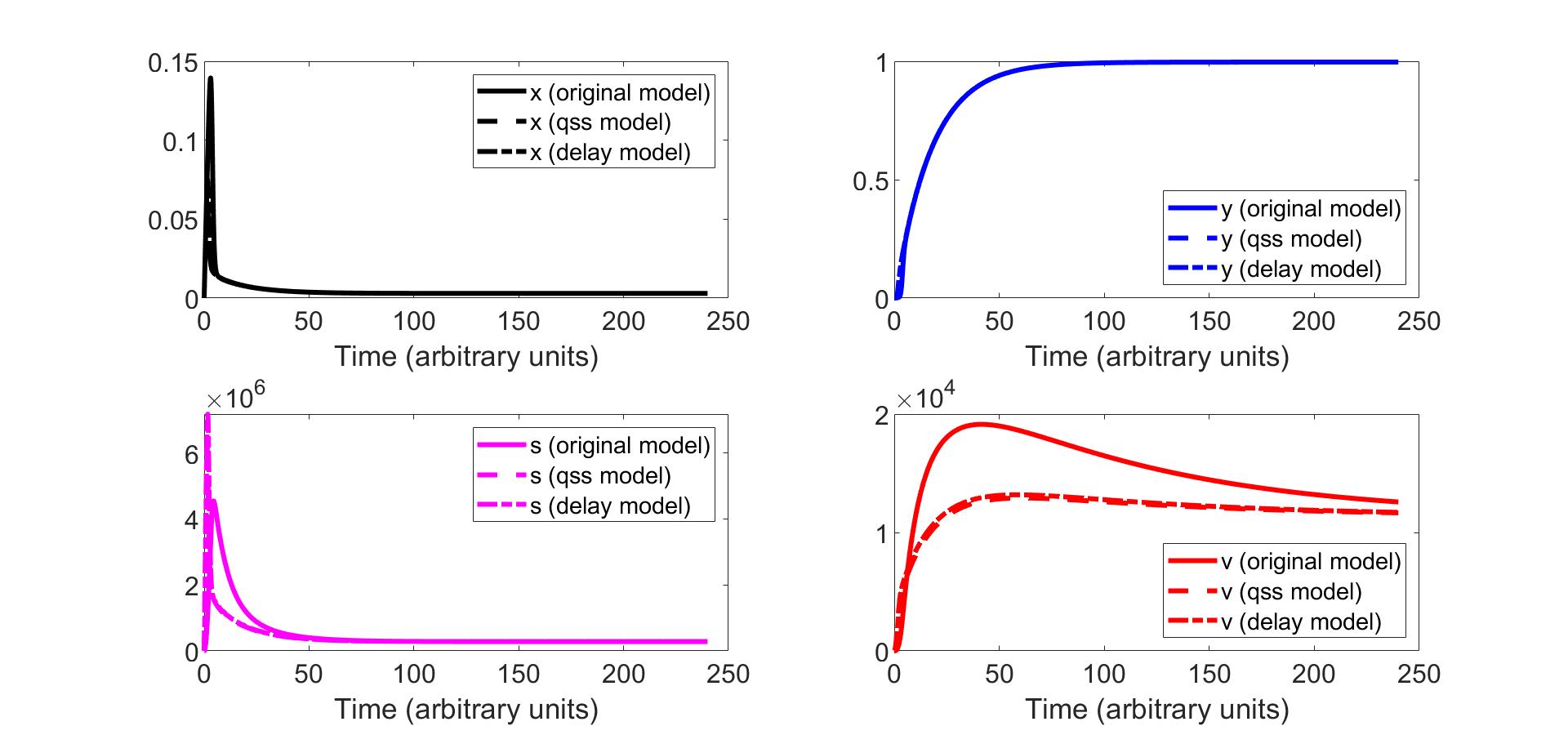}
        \caption{\footnotesize Numerical comparison between the 4-dimensional model ~\eqref{eq:reduced_1}-\eqref{eq:reduced_2}, QSS model \eqref{eq: qss1}-\eqref{eq: qss3} and Delayed model \eqref{eq: dm1}-\eqref{eq: dm3}. The y-axis represents the arbitrary unit of the variable. Parameter values are for T-cells. While all models show similar behavior for $x$ and $y$ dynamics, reduced models underestimate the growth of $v$ variable and there are differences in transient behavior for the $s$ variable.}
        \label{fig:Model comparison T-cells}
	\end{figure}
Since the Delayed model does not offer significant advantages over the QSS model in terms of capturing long term system behavior based on this calculation, in our stability analysis we will focus on the QSS model. We note that the Delayed model might be useful in case a more careful analysis of the transient mode is of interest.

\subsection{Stability of the QSS model}
%
For the quasi-steady state model \eqref{eq: qss1}-\eqref{eq: qss3}, its boundedness, positive invariance and the existence of a unique positive steady state are direct consequences of the results we established earlier for the original 4-dimensional model. Thus, we only need show local asymptotic stability for the positive steady state.
\begin{proposition} 
The positive steady state of the QSS system \eqref{eq: qss1}-\eqref{eq: qss3} is locally asymptotically stable.
\end{proposition}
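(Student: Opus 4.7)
The plan is to apply the Routh--Hurwitz criterion to the characteristic polynomial of the Jacobian at the unique positive equilibrium $(x^*,y^*,v^*)$ whose existence was established in the preceding proposition. Introducing the shorthand $h = h(v^*)$ where $h(v) := a_9(1+a_{10}v^n)/(1+a_{11}v^n)$, $p = h'(v^*)\,x^*$, and $\beta = a_5 a_7/a_8$, and noting that $p > 0$ because $a_{10} = v_b a_{11} > a_{11}$, the Jacobian at equilibrium takes the form
$$J = \begin{pmatrix} -(a_3+h) & a_4-a_1 & -p \\ h & -a_4 & p \\ \beta & 0 & -a_6 \end{pmatrix}.$$
I would expand the characteristic polynomial by grouping the common factor $(\lambda + a_6)$ coming from the first two columns, yielding
$$\det(\lambda I - J) = (\lambda + a_6)\bigl[\lambda^2 + (a_3 + h + a_4)\lambda + a_3 a_4 + a_1 h\bigr] + p\beta(\lambda + a_1).$$
The crucial cancellation here is that $-(a_4 - a_1)h$ combines with $(a_3+h)a_4$ so that the sign-indefinite factor $a_4 - a_1$ disappears entirely, replaced by the manifestly positive combination $a_3 a_4 + a_1 h$.

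Writing this polynomial as $\lambda^3 + c_1 \lambda^2 + c_2 \lambda + c_3$, I read off $c_1 = a_3+h+a_4+a_6$, $c_2 = a_3 a_4 + a_1 h + a_6(a_3+h+a_4) + p\beta$, and $c_3 = a_6(a_3 a_4 + a_1 h) + p\beta a_1$, all strictly positive. For the remaining Routh--Hurwitz inequality $c_1 c_2 > c_3$, setting $r = a_3+h+a_4$ and $q = a_3 a_4 + a_1 h$, a direct computation gives
$$c_1 c_2 - c_3 = r q + a_6 r^2 + p\beta(r + a_6 - a_1) = rq + a_6 r^2 + p\beta(k_{OFF} + h + k_I + a_6),$$
which is manifestly positive, completing the Routh--Hurwitz verification and yielding local asymptotic stability.

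The main obstacle is purely algebraic bookkeeping: one must identify the factorization $(\lambda+a_6)[\cdots] + p\beta(\lambda + a_1)$ so that the potentially negative coefficient $(a_4-a_1)$ vanishes cleanly. Once this is done, the three Routh--Hurwitz conditions reduce to sums of products of positive quantities, requiring no additional hypothesis on the Hill exponent $n$, on the relative magnitudes of $a_1$ and $a_4$, or on the specific cell type considered in Table~\ref{table:parameters}.
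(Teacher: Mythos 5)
Your proof is correct and follows essentially the same route as the paper: the Routh--Hurwitz criterion applied to the cubic characteristic polynomial of the $3\times 3$ Jacobian at the positive equilibrium, with the decisive observation that $a_3 - a_1 = k_{OFF} > 0$ neutralizes the only potentially negative contribution $-a_1 p\beta$ (your factorization $(\lambda+a_6)\bigl[\lambda^2+r\lambda+q\bigr]+p\beta(\lambda+a_1)$ is just a cleaner bookkeeping of the same coefficients $A_2,A_1,A_0$). The only quibble is a dropped positive term in your final expansion --- the correct identity is $c_1c_2-c_3 = rq + a_6 r^2 + a_6^2 r + p\beta(r+a_6-a_1)$ --- which does not affect the conclusion.
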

\begin{proof}
We consider the Jacobian of the system at $(x^*, y^*, v^*)$,
\begin{eqnarray}
J (x^*, y^*, v^*) =
\begin{pmatrix}
  -\left(a_3 + a_9 \frac{1 + a_{10} (v^*)^n}{1 + a_{11}(v^*)^n} \right) & a_4 - a_1 & -a_9 x^* \frac{(a_{10} - a_{11})n(v^*)^{n-1}}{(1+a_{11}(v^*)^n)^2}\\
  a_9 \frac{1 + a_{10} (v^*)^n}{1 + a_{11}(v^*)^n} & -a_4 & a_9 x^* \frac{(a_{10} - a_{11})n(v^*)^{n-1}}{(1+a_{11}(v^*)^n)^2} \\
 a_5\frac{a_7}{a_8} & 0 & -a_6 
 \end{pmatrix}.
\end{eqnarray}
Denote $\Delta_1 := a_9 \frac{1 + a_{10} (v^*)^n}{1 + a_{11}(v^*)^n}$, 
$\Delta_2 = a_9 x^* \frac{(a_{10} - a_{11})n(v^*)^{n-1}}{(1+a_{11}(v^*)^n)^2}$ and $\alpha = a_5 \frac{a_7}{a_8}$. Then the Jacobian matrix becomes:
\begin{eqnarray}
J (x^*, y^*, v^*) =
\begin{pmatrix}
  -\left(a_3 + \Delta_1 \right) & a_4 - a_1 & -\Delta_2\\
  \Delta_1 & -a_4 & \Delta_2 \\
 \alpha & 0 & -a_6 
 \end{pmatrix}.
\end{eqnarray}
Solving for the determinant of $J (x^*, y^*, v^*)-I \lambda$ to obtain the characteristic equation,
\begin{eqnarray}
\det( J (x^*, y^*, v^*)-I \lambda) & = &
\begin{pmatrix}
  -\left(a_3 + \Delta_1 \right)-\lambda & a_4 - a_1 & -\Delta_2\\
  \Delta_1 & -a_4-\lambda & \Delta_2 \\
 \alpha & 0 & -a_6 - \lambda
 \end{pmatrix}\\
& = & -(a_6 + \lambda)[(a_4+\lambda)(a_3+\Delta_1+\lambda) \nonumber\\
& & -\Delta_1(a_4-a_1)] \nonumber\\
&   & + \alpha[\Delta_2(a_4-a_1) - \Delta_2(a_4 + \lambda)]\\
& = & -(a_3 a_4 a_6 + \Delta_1 a_1 a_6 + \alpha a_1 \Delta_2) \nonumber\\
&   & -\lambda(a_4 a_6 + a_3 (a_4 + a_6) \nonumber\\
& & + \Delta_1(a_6+a_1)+\alpha \Delta_2)\nonumber\\
&   & -\lambda^2(a_4+a_6+a_3+\Delta_1) - \lambda^3.
\end{eqnarray}
Thus the characteristic equation takes the form,
\begin{eqnarray}
 q(\lambda) & = & \lambda^3 + \lambda^2(a_4+a_6+a_3+\Delta_1) + \lambda(a_4 a_6 + a_3 (a_4 + a_6)+\nonumber\\
 &   & + \Delta_1(a_6+a_1)+\alpha \Delta_2) + (a_3 a_4 a_6 + \Delta_1 a_1 a_6 + \alpha a_1 \Delta_2)\\
 & =: & \lambda^3 + \lambda^2 A_2 + \lambda A_1 + A_0.
\end{eqnarray}
Here $A_0=a_3 a_4 a_6 + \Delta_1 a_1 a_6 + \alpha a_1 \Delta_2$, $A_1=a_4 a_6 + a_3 (a_4 + a_6)+\Delta_1(a_6+a_1)+\alpha \Delta_2$ and $A_2=a_4+a_6+a_3+\Delta_1$.
By Routh-Hurwitz criterion, the condition for stability is satisfied if: (1) $A_2, A_0 > 0$, and (2) $A_2 \cdot A_1 > A_0$.

Since all parameters are positive, the condition (1) is satisfied. Additionally,
\begin{eqnarray}
 A_2\cdot A_1 - A_0 & = & (a_4 + a_6 + a_3 + \Delta_1) \nonumber\\
 & & \times(a_4 a_6 + a_3 a_4 + a_3 a_6 + a_6 \Delta_1 + a_1 \Delta_1 + \alpha \Delta_2)\nonumber\\
 &   & - (a_3 a_4 a_6 + a_1 a_6 \Delta_1 + \alpha a_1 \Delta_2)\\
 & = & (a_4+a_6+a_3+\Delta_1)(a_3a_4 + a_1\Delta_1 +\alpha \Delta_2)\nonumber\\
 &   & + (a_4+a_6+a_3+\Delta_1)(a_4a_6 + a_3a_6+a_6a_1)\nonumber\\
 &   & - (a_3 a_4 a_6 + a_1 a_6 \Delta_1 + \alpha a_1 \Delta_2)\\
 & = & (a_4+a_3+\Delta_1)(a_3a_4 + a_1\Delta_1 +\alpha \Delta_2)\nonumber\\
 &   & + (a_3 - a_1)\alpha \Delta_2 + (a_4+a_6+\Delta_1)\alpha \Delta_2\nonumber\\
 &   & + (a_4+a_6+a_3+\Delta_1)(a_4a_6 + a_3a_6+a_6a_1).
\end{eqnarray}
Since $a_3 = a_1 + (\text{a positive number})$ by definition, $A_2A_1>A_0$. Hence, the second condition is also satisfied.
\end{proof} 
The complete global stability result is difficult to obtain even for the quasi-steady state system (using standard Lyapunov functions). 
Instead, we observe that $\Tat_{crit}$ is several orders of magnitude smaller than the value of $\Tat$ shortly after the experiment starts. This means the nonlinear model is quickly reduced to the linear model, unless $\Tat_{crit}$ is significantly larger.
Qualitatively, this means that if the production rate of $\Tat$ is high enough, then we can expect the new model to show similar dynamical behavior to the original system (e.g. the positive steady state is globally asymptotically stable).
A similar observation should hold for very low production rate of $\Tat$.
%

In Fig.~\ref{fig:bifurcation} we numerically study the dependence of steady state on varying parameters over a reasonable range. The following results are representative of the study. They show that under reasonable parameter ranges, the positive steady state is always stable.
These observations suggest that the unique fixed point is expected to be global stable for the nonlinear system. 
%


%
\begin{figure}[!ht]
		\centering
\includegraphics[width=1\linewidth]{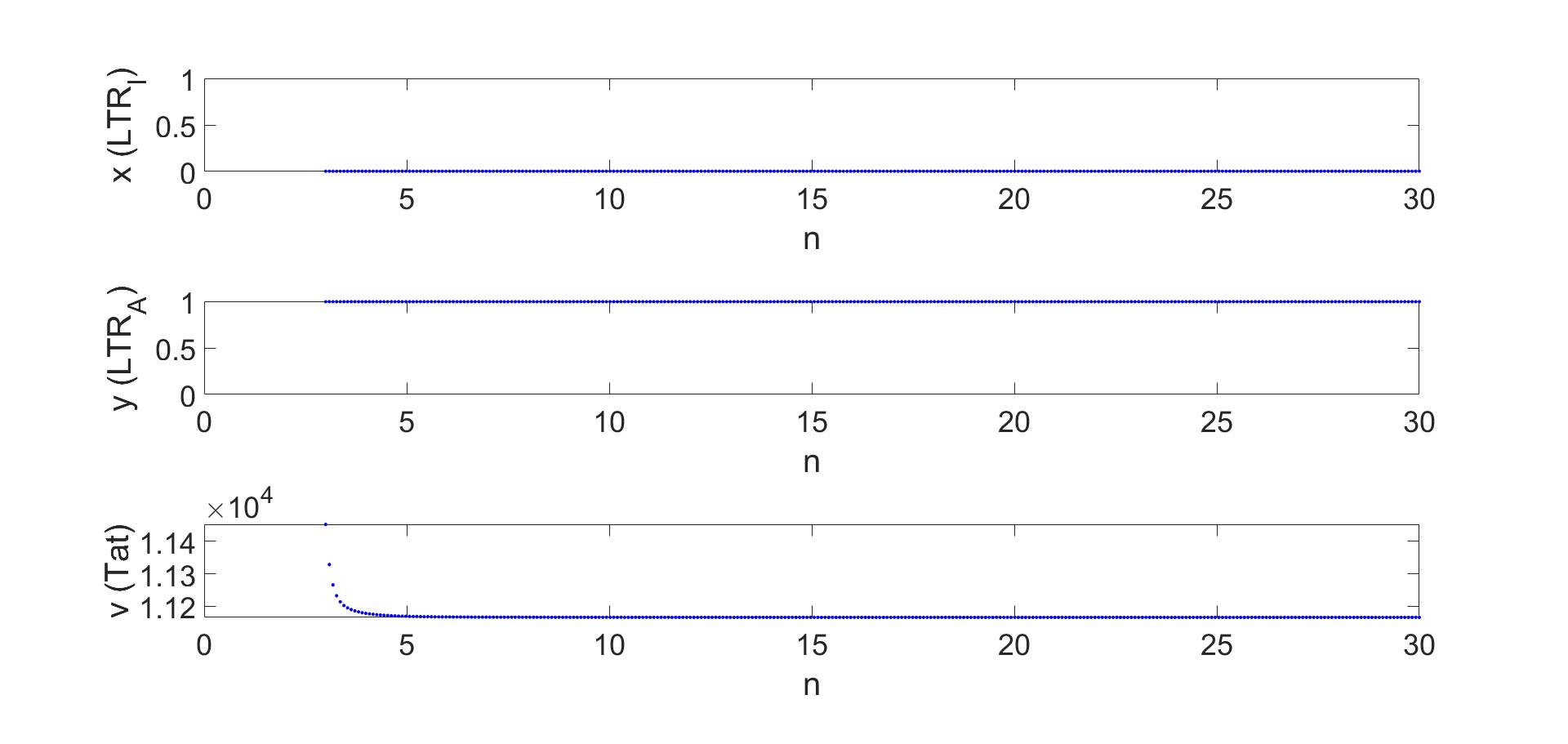}
        \caption{\footnotesize Numerical results for estimating the effect of varying parameter $n$ on the behavior of the QSS system \eqref{eq: qss1}-\eqref{eq: qss3}. Note that the positive steady state for $x$ is very small, but positive.}
        \label{fig:bifurcation}
	\end{figure}
%
\section{Parameter estimation - comparison of linear and nonlinear models}
\label{sec:param_esti}
%
%
In \cite{demarino2020differences}, we collected time series data for TAR and total $\env$ RNA. In that work, standard least squares method was used to fit the {\it linear model} given by \eqref{eq:main}-\eqref{eq:main_end} under assumptions \eqref{cond1}-\eqref{cond2} to all the data points simultaneously. We utilize the same method for direct comparison with the behavior of the {\it nonlinear model} specified by \eqref{eq:main}-\eqref{eq:main_end} under assumptions \eqref{eq:hill_func}-\eqref{eq:hill_func_2}. The fitting parameters are given by $\alpha_{m_1,A}, \alpha_{m_2,A}, \gamma_{m_1}$ and $\gamma_{m_2}$. While it is also possible to fit other parameters (especially the degradation of $\Tat$), this objective will likely result in over-fitting due to limited data. Our main purpose is to compare the values of numerically estimated parameters between linear and nonlinear models.

The function {\tt fmincon} in MATLAB is used to estimate these parameters within the same ranges as described in \cite{demarino2020differences}. Additionally, the range for $v_b$ is taken to be $[1,200]$ (i.e. around the value of $v_a$) and the ranges for $n$ and $m$ are taken to be $[1, \infty)$ since there are no known biological constraints for their upper bound.
The estimated values are presented in Table~\ref{table:nonlinear_params}.
\begin{figure}[!ht]
\centering
\includegraphics[width=1\linewidth]{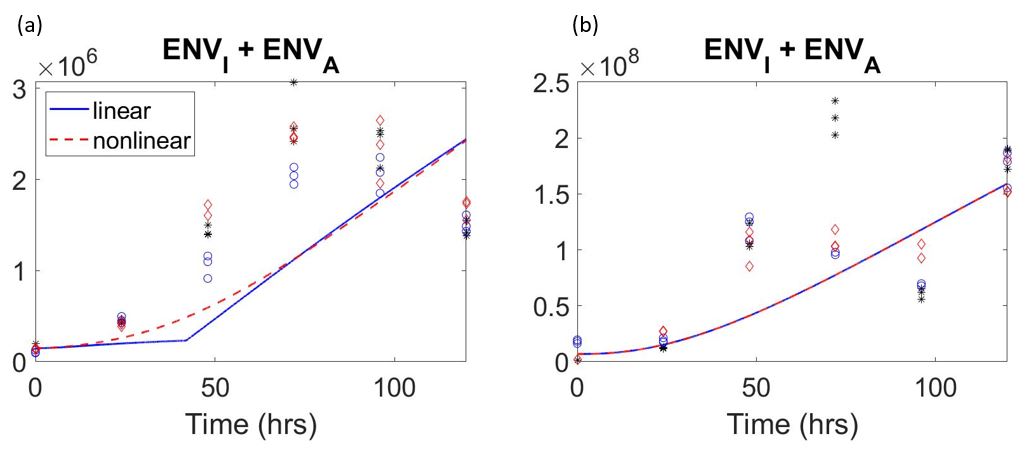}
\caption{\footnotesize Comparison of linear and nonlinear models fitted to \env~data in both T-cells and macrophages. (a) \env~data and the fitted models behavior in macrophages. (b) \env~data and the fitted models behavior in T-cells.}
        \label{fig:env_fitting}
\end{figure}
\begin{figure}[!ht]
\centering
\includegraphics[width=1\linewidth]{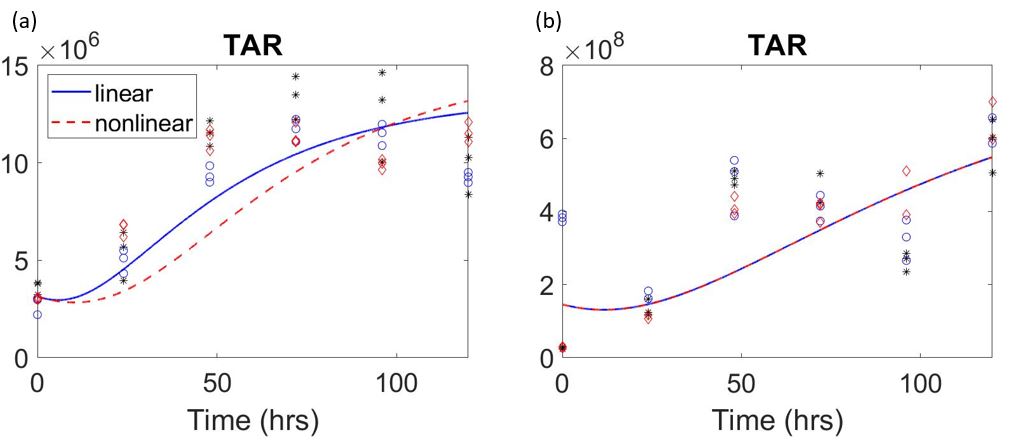}
\caption{\footnotesize Comparison of linear and nonlinear models fitted to \TAR~data in both T-cells and macrophages. (a) \TAR~data and the fitted models behavior in macrophages. (b) \TAR~data and the fitted models behavior in T-cells.}
        \label{fig:tar_fitting}
\end{figure}
\begin{figure}[!ht]
\centering
\includegraphics[width=1\linewidth]{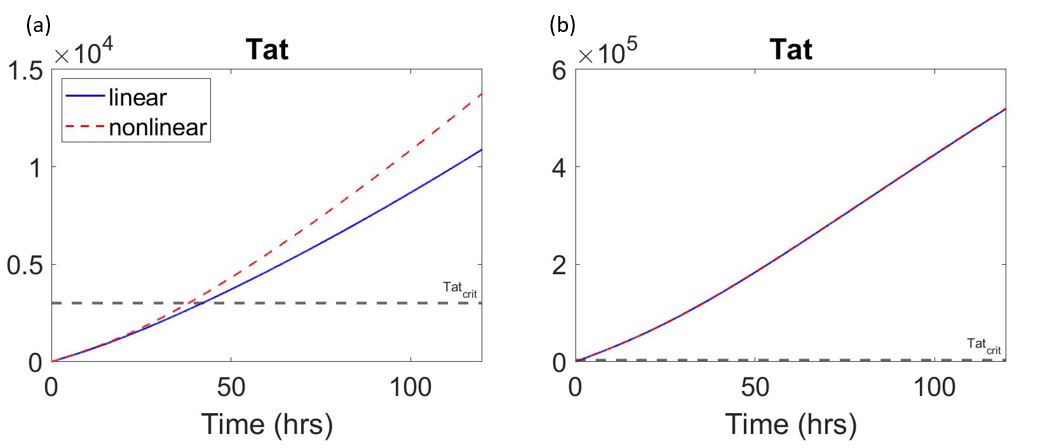}
\caption{\footnotesize Comparison of the dynamics of Tat resulted from fitting the linear and nonlinear models \TAR~data and \env~data in both T-cells and macrophages. (a) \Tat~dynamics in macrophages. (b) \Tat~dynamics in T-cells.}
        \label{fig:tat_fitting}
\end{figure}

Figures~\ref{fig:env_fitting}-\ref{fig:tar_fitting} show that both models produce similar fits for both types of cells. Dynamics for other variables are also very similar.
This is likely because even though the effect of Tat is significant in the dynamics of the model, the values of Tat in the linear and non-linear model remain relatively similar before and immediately after reaching the Tat$_{crit}$ threshold, see Figure~\ref{fig:tat_fitting}(a). 
Alternatively, it is also possible that the parameter Tat$_{crit}$ is orders of magnitude smaller than the value of Tat after several hours, see Figure~\ref{fig:tat_fitting}(b). In both scenarios, the effect of Tat on the dynamics of the system quickly saturates in a similar manner in both models. However, the nonlinear model gives a noticeable difference in the Tat dynamics as compared to the linear model in the case of macrophages. 
Yet, both models produce similar dynamics for $\Tat$ in T-cells. This observation poses an interesting modeling question (outside of the scope of this paper) that may also be biologically relevant: why are the $\Tat$ dynamics predicted by both models only differ only in the case of macrophages?

\begin{table}
\begin{center}
\begin{tabular}{lcccc}
\hline
                 & Unit                              & definition                  & T-Cell    & macrophages \\ \hline\hline
$\beta_{m_2,A}$    & [change/mL/hr] & $\LTR_I \rightarrow \LTR_A$ & same as $k_{A}(\Tat)$ & same as $k_{A}(\Tat)$ \\[.75 ex]
$\alpha_{m_1,A}$ & [copies/mL/hr]   & $\LTR_A \rightarrow \TAR$   & $1.38 \times 10^7$   & $5.25 \times 10^5$ \\[.75 ex]
$\alpha_{m_2,A}$ & [copies/mL/hr]   & $\LTR_A \rightarrow \env_I$ & $2.47 \times 10^6$   & $5.45 \times 10^4$ \\[.75 ex]
$\gamma_{m,1}$   & [$\TAR$/mL/hr]   & $\TAR$ degradation   & $1.17 \times 10^4$    & $2.68 \times 10^4$ \\[.75 ex]
$\gamma_{m,2}$   & [$\env_I$/mL/hr] & $\env_I$ degradation & $2.25 \times 10^3$    & $2.91 \times 10^4$ \\[.75 ex]
$v_b$            & unitless         & folds of increasing  & $10$  & $10$ \\ [.75 ex]
$n$            & unitless         & degree of non-linearity  & $6.23$  & $1$ \\ [.75 ex]
$m$            & unitless         & degree of non-linearity  & $4.98 \times 10^1$  & $1$ \\ [.75 ex]
 \hline
 \end{tabular} \\
 \end{center}
\vspace{0.0in}
\caption{Values of additional parameters for the nonlinear model. Parameters not mentioned here are taken to be the same as their values in Table~\ref{table:parameters}.}
\label{table:nonlinear_params}
\end{table}





\section{Effect of different drug types}
\label{sec:drug_effect}
\subsection{The effect of drug type F07\#13 in combination with standard treatments of HIV}
Concerning the drug F07\#13, when it is administered, the values of $w_1, w_5$ increase higher than 1. This leads to an increase in the steady states $\LTR_I^*$ and $\LTR_R^*$, while $\LTR_A^*$ will decrease. Consequently, the value of $\env_I^*$ will increase, while $\env_A^*$ will decrease. 
These effects eventually affect the production of $\Pr55$.
Since $\Pr55$ can be used as a tracker for viral proteins production, the effect of F07\#13 may potentially be studied by looking at how it affects the dynamics of $\Pr55$.

%
%

First we will demonstrate that it is not trivial that treatment using F07\#13 will reduce the production of $\Pr55$.
Note that F07\#13 reduces the activation rate of $\LTR$, so it negatively affects the proportion of $\LTR_A$ and consequentially the production of $\env_A$ and the corresponding production of $\Pr55$ from $\env_A$.
However, this comes at a cost of increasing the proportion of $\LTR_I$, which increases the production of $\env_I$ that also contributes to the production of $\Pr55$ at an equal rate to that of $\env_A$.
The contributions of $\env_I$ and $\env_A$ are reflected in the final state $\Pr55^*$ at an equal rate of $\frac{\alpha_{p_2} w_2}{\alpha_{p_3}}$. 
In other words, it is not obvious whether or not F07\#13 effectively decreases $\Pr55$.

For instance, consider their ratio when $w_2 = w_3 = w_4 = 1$ (only F07\#13 is present) and $\Tat^* \geq \Tat_{crit}$:
\begin{eqnarray}
\frac{\env_I^*}{\env_A^*} & = & w_1 w_5 \frac{\alpha_{m_2,I}}{\alpha_{m_2,A}} \frac{\gamma_{m_2} + \alpha_{p_2}}{\gamma_{m_2} + \alpha_{p_1} + \alpha_{p_2}} \frac{k_I}{k_A}.
\end{eqnarray}
This shows that F07\#13 affects the relative concentration of $\env_I^*$ and $\env_A^*$; however, the actual amount of increasing/decreasing due to F07\#13 is not clear. 

For the reasons mentioned above we take a different approach. We let $w := \frac{1}{w_1 w_5}$ and rewrite $\Pr55^*$ in term of $w$:
\begin{align}
\Pr55^*(w) &= \left(\frac{\alpha_{p_2}}{\alpha_{p_3}} \right) \left[ \frac{k_{ON} k_I}{k_I (k_{ON} + k_{OFF}) + k_{ON} k_A w }\right] \nonumber\\ 
& \times \frac{\alpha_{m_2,I}}{\gamma_{m_2} + \alpha_{p_1} + \alpha_{p_2}} \nonumber\\
& + \left(\frac{\alpha_{p_2}}{\alpha_{p_3}} \right) \left[ \frac{k_{ON} k_A}{k_I (k_{ON} + k_{OFF}) + k_{ON} k_A w }\right] \frac{\alpha_{m_2,A}}{\gamma_{m_2} + \alpha_{p_2}} w \\
&= \left(\frac{\alpha_{p_2}}{\alpha_{p_3}} \right) \left[ \frac{k_{ON}}{k_I (k_{ON} + k_{OFF}) + k_{ON} k_A w }\right] \nonumber\\
& \times \left(\frac{k_I \alpha_{m_2,I}}{\gamma_{m_2} + \alpha_{p_1} + \alpha_{p_2}} + \frac{k_A \alpha_{m_2,A}}{\gamma_{m_2} + \alpha_{p_2}} w\right).
\end{align}
Taking derivative with respect to $w$,
\begin{align}
\frac{d}{dw} \Pr55^*(w) &= \left(\frac{\alpha_{p_2}}{\alpha_{p_3}} \right) \left[-  \frac{(k_{ON}) (k_{ON} k_A)}{(k_I (k_{ON} + k_{OFF}) + k_{ON} k_A w)^2 }\right] \nonumber\\
& \times \left(\frac{k_I \alpha_{m_2,I}}{\gamma_{m_2} + \alpha_{p_1} + \alpha_{p_2}} + \frac{k_A \alpha_{m_2,A}}{\gamma_{m_2} + \alpha_{p_2}} w\right) \nonumber\\
 &+ \left(\frac{\alpha_{p_2}}{\alpha_{p_3}} \right) \left[ \frac{k_{ON}}{k_I (k_{ON} + k_{OFF}) + k_{ON} k_A w }\right] \frac{k_A \alpha_{m_2,A}}{\gamma_{m_2} + \alpha_{p_2}}\\
 &= \left(\frac{\alpha_{p_2}}{\alpha_{p_3}} \right) \left[ \frac{k_{ON} }{k_I (k_{ON} + k_{OFF}) + k_{ON} k_A w }\right] \left( \frac{k_A \alpha_{m_2,A}}{\gamma_{m_2} + \alpha_{p_2}} \right) \nonumber\\
 & \left[1 - \frac{k_{ON} k_A}{k_I (k_{ON} + k_{OFF}) + k_{ON} k_A w }
 \times \left( w + \frac{\alpha_{m_2,I}}{\alpha_{m_2,A}} \frac{\gamma_{m_2} + \alpha_{p_2}}{\gamma_{m_2} + \alpha_{p_1} + \alpha_{p_2}} \frac{k_I}{k_A} \right) \right].
\end{align}
Now at $w = 1$ (or with no F07\#13 drug), then
\begin{align}
\frac{d}{dw} \Pr55^*(w = 1) &= \left(\frac{\alpha_{p_2}}{\alpha_{p_3}} \right) \left[ \frac{k_{ON} }{k_I (k_{ON} + k_{OFF}) + k_{ON} k_A }\right] \left( \frac{k_A \alpha_{m_2,A}}{\gamma_{m_2} + \alpha_{p_2}} \right) \nonumber\\
 & \left[1 - \frac{k_{ON} k_A}{k_I (k_{ON} + k_{OFF}) + k_{ON} k_A } \left( 1 + \frac{\alpha_{m_2,I}}{\alpha_{m_2,A}} \frac{\gamma_{m_2} + \alpha_{p_2}}{\gamma_{m_2} + \alpha_{p_1} + \alpha_{p_2}} \frac{k_I}{k_A} \right) \right]\\
 &= \left(\frac{\alpha_{p_2}}{\alpha_{p_3}} \right) \left[ \frac{k_{ON} }{k_I (k_{ON} + k_{OFF}) + k_{ON} k_A }\right] \left( \frac{k_A \alpha_{m_2,A}}{\gamma_{m_2} + \alpha_{p_2}} \right) \nonumber\\
 & \left[1 - \frac{k_{ON} k_A}{k_I (k_{ON} + k_{OFF}) + k_{ON} k_A } \left( 1 + \frac{\env_I^*}{\env_A^*} \right) \right].
\end{align}
Recall that $w = \frac{1}{w_1 w_5}$, so increasing $w_1 w_5$ leads to decreasing $w$. Thus if we want the drug F07\#13 to decrease the amount of $\Pr55$, then we require $\frac{d}{dw} \Pr55^*(w = 1) > 0$. This leads to the condition
\begin{eqnarray}
\frac{k_{ON} k_A}{k_I (k_{ON} + k_{OFF}) + k_{ON} k_A } \left( 1 + \frac{\alpha_{m_2,I}}{\alpha_{m_2,A}} \frac{\gamma_{m_2} + \alpha_{p_2}}{\gamma_{m_2} + \alpha_{p_1} + \alpha_{p_2}} \frac{k_I}{k_A} \right) & < & 1
\end{eqnarray}
If we expand the above inequality, we obtain:
\begin{eqnarray}
\frac{\alpha_{m_2,I}}{\alpha_{m_2,A}}\frac{\gamma_{m_2} + \alpha_{p_2}}{\gamma_{m_2} + \alpha_{p_1} + \alpha_{p_2}} & < & 1 + \frac{k_{OFF}}{k_{ON}},
\end{eqnarray}
or
\begin{eqnarray}
\label{eq:F07_cond}
\frac{\alpha_{m_2,I}}{\alpha_{m_2,A}} & < & \left(1 + \frac{k_{OFF}}{k_{ON}} \right) \left( 1 + \frac{\alpha_{p_1}}{\gamma_{m_2} + \alpha_{p_2}} \right).
\end{eqnarray}
This means that for the drug F07\#13 to be effective in reducing the equilibrium value of $\Pr55$, the ratio between the rates of production $\alpha_{m_2,I}$ and $\alpha_{m_2,A}$ must satisfy the condition in~\eqref{eq:F07_cond}. 
%
Note that this condition is necessary because even though F07\#13 may appear to be effective initially, it may not decrease the equilibrium value of $\Pr55$, see Figure 9.e in \cite{demarino2020differences}.

We remark that this result is only valid close to $w = 1$, so it may not be applicable in general (e.g. $w_{\text{F07\#13}} \approx 0.1$ in T-cell). However, the idea is the same in the general case, so for the drug to be effective, we require $\frac{d}{dw} \Pr55^*(w) > 0$ for $0<w \leq 1$, or equivalently,
\begin{equation}
\left[1 - \frac{k_{ON} k_A}{k_I (k_{ON} + k_{OFF}) + k_{ON} k_A w } \left( w + \frac{\alpha_{m_2,I}}{\alpha_{m_2,A}} \frac{\gamma_{m,2} + \alpha_{p_2}}{\gamma_{m_2} + \alpha_{p_1} + \alpha_{p_2}} \frac{k_I}{k_A} \right) \right] > 0.
\end{equation}
%
%
If this inequality holds, then we can expect the F07\#13 to be effective in decreasing $\Pr55$, which will subsequently decrease production of $\p24$. Similar analysis holds when $\Tat^* < \Tat_{crit}$.
Additionally, if all drugs are considered, e.g. $w_2, w_3, w_4 > 1$, then the condition becomes:

\begin{align}
\label{eq:general_cond_F07}
\left[1 - \frac{k_{ON} k_A w^*}{k_I (k_{ON} + k_{OFF}) + k_{ON} k_A w^* w } \left( \frac{w}{w^*} + \frac{\alpha_{m_2,I}}{\alpha_{m_2,A}} \frac{\gamma_{m,2} + \alpha_{p_2}}{\gamma_{m_2} + \alpha_{p_1} + \alpha_{p_2}} \frac{k_I}{k_A} \right) \right] > 0
\end{align}
where $w^* := w_3 w_4$. This suggests the effect of the F07\#13 drug can be enhanced by the other drugs. Equation~\eqref{eq:general_cond_F07} represents the generalization of~\eqref{eq:F07_cond}, accounting for other drugs and including all values of $w\in(0,1)$.
Figure~\ref{fig:condition_F07_verified} demonstrates an example of the condition~\eqref{eq:F07_cond}.
\begin{figure}
    \centering
    \includegraphics[width=1\linewidth]{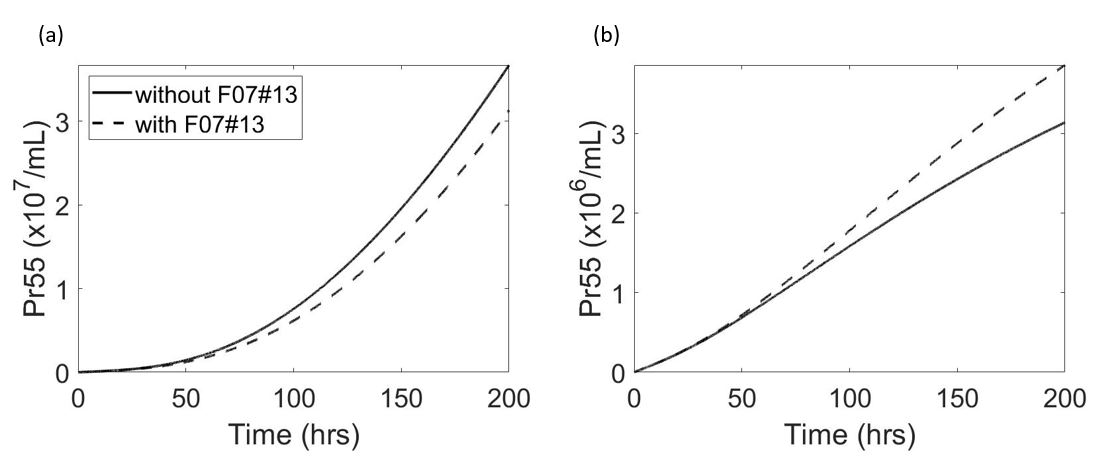}
    \caption{Using the parameters for the linear model for T-cells given in Table~\ref{table:parameters}, we compare the effect of F07\#13 on the dynamics of $\Pr55$ for different levels of $\alpha_{p_1}$.
    (a) Without modification to the parameter, the condition~\eqref{eq:F07_cond} is satisfieda and F07\#13 is effective in reducing the level of $\Pr55$. 
    (b) By reducing the parameter $\alpha_{p_1}$ 1000 folds we break the condition~\eqref{eq:F07_cond}, leading to the ineffectiveness of F07\#13 in reducing the level of $\Pr55$.}
    \label{fig:condition_F07_verified}
\end{figure}

\begin{figure}
    \centering
    \includegraphics[width=1\linewidth]{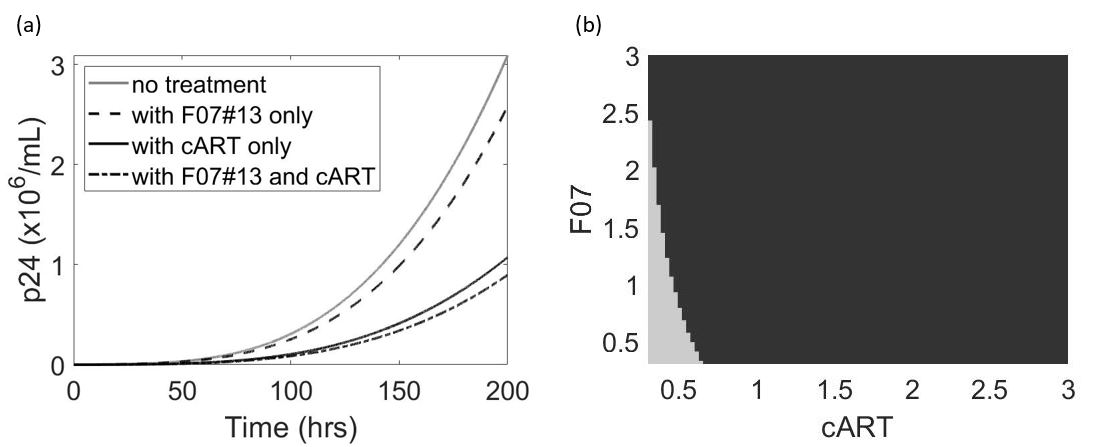}
    \caption{\footnotesize Using the parameters for the linear model for T-cells given in Table~\ref{table:parameters}, we compare the effect of F07\#13 and cART on the dynamics of $p24$. The effect of cART is set at 3 ($w_3 = 3$), meaning it reduces the production of $p24$ from $\Pr55$ to a third. (a) The combination of F07\#13 and cART reduces the level of $p24$ lower than either treatment alone. (b) Level of $p24$ at 200 hours with varying amounts of F07\#13 and cART. Both axes represent the fold change in the drug level of F07\#13 and cART. We use an arbitrary threshold $1.2\times10^6$ to emphasize the effect in varying F07\#13 and cART with respect to one another. The light grey area represents $p24$ level above the threshold, while the dark grey area represents $p24$ level below the threshold.}
    \label{fig:F07_and_cART_synergy}
\end{figure}
The maturation of $\Pr55$ to $\p24$ is targeted by the standard HIV-1 treatment cART (recall that this is represented by the parameter $w_2$ in the model). In Figure~\ref{fig:F07_and_cART_synergy}, we show the possible effects of combining F07\#13 and cART. Figure~\ref{fig:F07_and_cART_synergy}(a) shows that while both cART and F07\#13 are effective in reducing the amount of p24 at 200 hours, when they are used in combination, the level of p24 is reduced further. 
Additionally, we provide a sample synergy map between F07\#13 and cART (without accounting for toxicity) in Figure~\ref{fig:F07_and_cART_synergy}(b). In that figure, an arbitrary $\p24$ level (e.g. $1.2\times10^6$) is used to emphasize the synergistic effect between the two treatments. The light grey area represents $\p24$ level above the threshold, while the dark grey area represents $\p24$ level below the threshold. Since the boundary leans toward a higher dosage of F07\#13, this shows that there is synergy between the two drugs. However, because we do not consider the toxicity level and there is some lack of confidence in the exact values of the drug effect/amount, further study with more comprehensive data is required to estimate the specific value of the synergy between F07\#13 and cART.

\section{Discussion}
\label{sec:disscussion}
%
HIV-1       continues to be a serious problem worldwide. 
Despite tremendous efforts, the ultimate cure for HIV-1 is yet to be discovered. 
Standard treatments, such as cART, target multiple key points in the production of HIV; however, a low level of viral products persists during latency partially due to the lack of an FDA-approved drug to inhibit the viral transcription process. 
This chronic state of HIV-1 is often accompanied by neurocognitive disorders in many patients using cART~\citep{heaton2010hiv,mothobi2012neurocognitive}.
Additionally, experimental drugs often fail during the phase of clinical trials~\citep{khanna2012drug,hwang2016failure}.
This is partially due to a lack of quantitative methods to predict the drug's efficacy and toxicity especially in combination with other drugs.
Thus, a basic understanding of these mechanisms for HIV-1 is crucial for the successful development of new therapies.

In this work, we carry out systematic analyses of the properties of a model of the HIV-1 transcription process that incorporates three distinct promoter states (repressed, intermediate, and activated), introduced in~\cite{demarino2020differences}. To address the discontinuity in the functional response chosen in the original model formulation, we introduce a nonlinear version of the model. The comparison of the two versions of the model reveals interesting biological insights into the transcription process for HIV-1. Finally, a theoretical study of the effectiveness of the experimentally-driven drug F07\#13 is carried out. We summarize and further discuss our findings below.

\textbf{The linear model of ~\citep{demarino2020differences} satisfies basic biological properties.}
For the original system, we show that it is positively invariant given at least one of the LTR states is initially positive.
%
Additionally, we show that all solutions tend to a positive steady state, which is explicitly provided.
The stability of the steady is difficult to link to a biological limit (e.g. resource limitation, etc.). Instead, we should interpret this result as a \textit{stable} increase in the viral HIV-1 load once the activation of transcription starts. 
Furthermore, the availability of the closed form solution allows for direct quantification of the expected viral level and its exponential transcriptional rates, allowing to assess their effect on the transcription process. 

\textbf{Stability results for the extended system.}
To address the discontinuity in the original formulation, we use a continuous functional response that represents the $\Tat$-dependent activation rate. 
We then carry out systematic analyses on the new model. We compare the fitting of the new model to the original version given in \cite{demarino2020differences} to validate its capability to capture the transcriptional dynamics of HIV-1, as shown in  Figures~\ref{fig:env_fitting}-\ref{fig:tar_fitting}.
Furthermore, to show the full capability of the extended model to capture the dynamics of the process, we carry out numerical fitting in Figures~\ref{fig:env_fitting_all_params}-\ref{fig:tar_fitting_all_params}.
The extended system shares many similarities with the original system. For instance, all solutions are positive, bounded and permanent given reasonable initial conditions. It also exhibits a unique positive steady state; however, the stability analysis of this steady state is non-trivial due to the nonlinear functional form of $\Tat$-dependent activation.
By making some simplifying assumptions, we show that the positive steady state is locally asymptotically stable in the special case. 
Furthermore, we see that the extended system can be simplified to the original system in the limit when the production of $\Tat$ is either very high or very low. Thus, we conjecture that the positive steady state is globally stable in the general case.


\textbf{The estimated values of the nonlinear parameters ($v_b, n, m$) suggest distinctive differences between T-cells and macrophages.}
Recall that the larger the values of $n$ and $m$ are, the more alike to a switching function $k_a(\Tat)$ and $f_m(\Tat)$ become, see Figure~\ref{fig:hill_function}.
In T-cells, the large values of $n$ and $m$ (6.23 and 49.8, respectively) suggest more abrupt changes in the level of transcriptional dynamics as $\Tat$ approaches the critical threshold $\Tat_{crit}$, see Table~\ref{table:nonlinear_params}. 
On the other hand, the small value of $n$ and $m$ (both are 1) in macrophages suggest a smoother transition.
Additionally, the fold change $k_A(\Tat)$, $v_b$, is significantly lower than $f_m(\Tat)$, $v_a$, which suggests the increase in promoter activation level before and after $\Tat$ reaches $\Tat_{crit}$ is significantly lower than that of the production of $\env_A$ over the same transition.
Note that these observations still hold even when parameter uncertainty is taken into account, see Table~\ref{table:CI_parameters}.

\textbf{The transcriptional inhibitor F07\#13 is effective in reducing viral production. Furthermore, it is synergetic to standard treatments.}
A useful application of mathematical models is to test the effectiveness of pre-clinical drugs \textit{in silico}. To this end, we study the effect of the HIV-1 transcription inhibitor drug F07\#13. 
The previous simulation in Figure 9(e) in \cite{demarino2020differences} shows that while a drug (e.g. F07\#13) may appear to be ineffective (or effective) initially, the end-result may differ. 
This leads to us establishing a condition that ensures the end-effectiveness of F07\#13 -- that can be applied similarly to other drugs.
Furthermore, we generalize this condition to include the effect of other drugs, which can allow a study of combination therapy to be carried out naturally.
Our simulation and analyses suggest that the incorporation of HIV-1       transcription inhibitors, such as F07, in combination with other HIV-1       treatments may improve their efficacy due to their synergy with one another, see Figure~\ref{fig:F07_and_cART_synergy}.


Using a combination of mathematical analysis and computational simulations, we show interesting observations in the transcriptional dynamics of HIV-1, especially the differentiation in behaviors in the case of T-cells and macrophages. While our model is constructed based on current biological knowledge and validated with experimental data, it is not without limitations.
The model is constructed for the analysis of short term transcriptional dynamics of HIV-1 (on the order of days). Thus, many of the rate parameters are linear, making it unsuitable to study the long term dynamics of HIV-1 (over months or years). 
Additionally, the model does not account for the difference between degradation and exit rates of certain variables.
While the model can be modified to account for the extracellular contents to distinguish between degradation and exit rates, this would further increase the complexity of the model.
Furthermore, we hypothesized the functional forms of the $\Tat$-dependent activation, which may perhaps be improved upon in future attempts.
Finally, while we provide the condition for drug effectiveness in the case of the original model, toxicity may also be included in the case of the nonlinear model to provide stronger insights.
On the analytical side, while we were able to establish basic properties for the linear and nonlinear models in this work, the global stability analysis is still an open question. However, our simulation strongly suggests the existence of a unique globally asymptotically stable positive steady state for the nonlinear model.
With regards to numerical aspects, we carry out basic data fitting and simulations to show differences in dynamical behavior between T-cells and macrophages. 
Perhaps extensive sensitivity analysis can be carried out to aid with the fitting process in the future. Additional data would also allow for better uncertainty quantification for the estimated parameter values and model predictions.
Finally, while we consider the primary effect of the transcriptional inhibitor F07\#13 to reduce the activation rate of LTR, other secondary effects of F07\#13 are not taken into account. Thus, a direct extension would be to account for all known effects of F07\#13 and cART (along with their potential toxicity) in the study of treatment combination. Such a study may prove useful in drug development for clinical application.

\begin{acknowledgements}
We would like to thank all members of the Kashanchi lab, especially Gwen Cox. This work was supported by National Institutes of Health
(NIH) Grants AI078859, AI074410, AI127351-01, AI043894, and NS099029 to F.K., F31NS109443 to C.D., and
George Mason University’s Multidisciplinary Research (MDR) Initiative in Modeling, Simulation and Analytics
funding provided by George Mason University to C.D., D.M.A, M.E., and F.K. Additionally, we
would like to acknowledge all those who participated in Mason Modeling Days 2017 which was funded in part by
the National Science Foundation DMS grant \#1056821 and the College of Science at George Mason University. 
Y.K.  is  partially  supported  by  NSF  grants  DMS-1615879, DEB-1930728 
and  an  NIH  grant5R01GM131405-02.
\end{acknowledgements}

\section*{Availability of data and material}
The data sets and MATLAB code generated during and/or analyzed during the current study are available from the corresponding author on reasonable request.

%
\section*{Conflict of interest}

The authors declare that they have no conflict of interest.

\section{Appendix}
\subsection{Basic properties of the linear model}
\label{appendix:basic_properties}
Note that in the original model, all negative terms in each variable's rate equation is proportional to the variable itself. This implies if the model starts out with positive initial conditions, it will never become negative. 
The subsystem $\LTR_R, \LTR_I, \LTR_A$ is conserved, so the zero steady state is always unstable given at least one of the $\LTR$ state is positive initially.
Furthermore, since the system (equation~\ref{eq:main}-\ref{eq:main_end}) has a globally stable unique positive steady state (see Theorem~\ref{theorem:gas_ori}), the system is bounded eventually.
Thus we arrive at the proposition.
\begin{proposition}
The system in Equation~\ref{eq:main}-\ref{eq:main_end} is positively invariant and eventually bounded.
\end{proposition}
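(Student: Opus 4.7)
The plan is to prove the two assertions separately: positive invariance by a direct structural inspection of the vector field on the boundary of the nonnegative orthant, and eventual boundedness by combining the $\LTR$ conservation law with the global attraction result already established in Theorem~\ref{theorem:gas_ori}.

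For positive invariance I would verify the standard tangent (Nagumo) condition on each coordinate hyperplane $\{X_i = 0\}$ of the nonnegative orthant $\{(X_1,\dots,X_9) : X_i \geq 0\}$. Inspection of \eqref{eq:main}--\eqref{eq:main_end} shows that every equation has the structure $\dot X_i = R_i(X) - c_i X_i$, where $R_i$ is a nonnegative linear combination of the remaining variables (this uses nonnegativity of the constant $k_A$, of $f_{m_2}(\Tat)$ in both branches of \eqref{cond1}, and of all rate constants $w_j$, $\alpha_\cdot$, $k_\cdot$) and $c_i \geq 0$. Hence whenever $X_i(t_0) = 0$ with the other coordinates nonnegative, $\dot X_i(t_0) = R_i(X(t_0)) \geq 0$, and the trajectory cannot exit into $X_i < 0$. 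This step is essentially bookkeeping over nine faces.

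For eventual boundedness I would proceed along the dependency graph of the system. Summing \eqref{eq:main}--\eqref{eq:main3} yields $\frac{d}{dt}(\LTR_R + \LTR_I + \LTR_A) \equiv 0$, so, together with positive invariance, each $\LTR$ variable is bounded above by the initial total (which is $1$ under the stated initial conditions) for all $t \geq 0$. With the $\LTR$s bounded, each of the equations for $\TAR$, $\env_I$, and $\env_A$ takes the form $\dot X \leq M - \mu X$ with constants $M$ and $\mu > 0$, and a scalar Gronwall/comparison argument gives $\limsup_{t\to\infty} X(t) \leq M/\mu$. The same argument then propagates through $\Tat$ (using $\gamma_{p_1} > 0$, per the note in the caption of Table~\ref{table:parameters}), then $\Pr55$, and finally $\p24$ (using $\gamma_{p_2} > 0$). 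More succinctly, eventual boundedness is an immediate consequence of Theorem~\ref{theorem:gas_ori}: every trajectory converges to the unique positive steady state and is therefore eventually contained in any bounded neighborhood of it.

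The only real obstacle is bookkeeping: on the invariance side, verifying the tangent condition on each of the nine coordinate hyperplanes; on the boundedness side, traversing the coupled variables in the correct topological order so that each comparison bound has its inputs already bounded. No novel technique is required; the proposition is essentially a corollary of Theorem~\ref{theorem:gas_ori} together with the triangular structure of the vector field.
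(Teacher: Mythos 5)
Your proposal is correct and matches the paper's own (very brief) argument: positive invariance follows because every negative term in each rate equation is proportional to the variable itself, and eventual boundedness is deduced from the global stability result of Theorem~\ref{theorem:gas_ori} together with the $\LTR$ conservation law. Your additional self-contained cascade argument (bounding $\TAR$, $\env_I$, $\env_A$, $\Tat$, $\Pr55$, $\p24$ in dependency order via $\dot X \leq M - \mu X$) is a sound refinement that avoids leaning on the stability theorem, but it does not change the substance of the approach.
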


\subsection{Calculation of steady states for the case of constant rate between the intermediate to activated LTR and piecewise production of activated envelope}
\label{appendix:steady_state_linear}

In dimension-reduced form (using $1 = \LTR_R + \LTR_I + \LTR_A$), the rate of change of $\LTR_I$ becomes:

\begin{eqnarray}
\frac{d}{dt} \left[ \LTR_I \right] & = & k_{ON} - \left(\frac{w_3 w_4}{w_5}k_A + k_{OFF} + k_{ON}\right) \LTR_I \nonumber \\
& &+ (w_1 k_I - k_{ON}) \LTR_A
\end{eqnarray}

Setting $\LTR_A' = 0$ and solve for $\LTR_I^*$ in term of $\LTR_A^*$ gives:

\begin{eqnarray}
\LTR_I^* & = & \frac{w_1 w_5}{w_3 w_4} \frac{k_I}{k_A} \LTR_A^*
\end{eqnarray}

Replacing $\LTR_I^*$ into $\LTR_I' = 0$ to solve for $\LTR_A^*$:

\begin{eqnarray}
\LTR_A^* & = & \frac{w_3 w_4}{w_1 w_5} \frac{k_{ON} k_A}{k_I (k_{ON} + k_{OFF}) + k_{ON} k_A \left(\frac{w_3 w_4}{w_1 w_5}\right)}
\end{eqnarray}

Using the relation between $\LTR_I^*$ and $\LTR_A^*$, we obtain:

\begin{eqnarray}
\LTR_I^* & = & \frac{w_1 w_5}{w_3 w_4} \frac{k_I}{k_A} \LTR_A^*\\
& = & \frac{k_{ON} k_I}{k_I (k_{ON} + k_{OFF}) + k_{ON} k_A \left(\frac{w_3 w_4}{w_1 w_5}\right)}
\end{eqnarray}

And similarly for $\LTR_R^*$:

\begin{eqnarray}
\LTR_R^* & = & \frac{k_{OFF}}{k_{ON}} \LTR_I^* \\
 & = & \frac{k_{OFF} k_I}{k_I (k_{ON} + k_{OFF}) + k_{ON} K_A \left(\frac{w_3 w_4}{w_1 w_5}\right)}
\end{eqnarray}

We proceed to compute the steady states of the remaining variables. In the asymptotic limit, then $\env_I'$ becomes

\begin{eqnarray}
\env_I' & = & \alpha_{m_2,I} \LTR_I^* - (\gamma_{m,2} + \alpha_{p_1} + \alpha_{p,2})\env_I
\end{eqnarray}

It follows that

\begin{eqnarray}
\env_I^* & = & \frac{\alpha_{m_2,I} \LTR_I^*}{\gamma_{m,2} + \alpha_{p_1} + \alpha_{p_2}}
\end{eqnarray}

Similarly, 

\begin{eqnarray}
\Tat^* & = & \frac{\alpha_{p_1} \env_I^*}{\gamma_{p_1}}
\end{eqnarray}

And

\begin{eqnarray}
\TAR^* & = & \frac{\alpha_{m_1,R}\LTR_A^* + \alpha_{m_1,I}\LTR_I^* + \alpha_{m_1,A}\LTR_A}{\gamma_{m_1}}
\end{eqnarray}

Then,

\begin{eqnarray}
\env_A^* = 
\left\{ 
\begin{array}{ll}
\frac{\alpha_{m_2,A} \LTR_A^*}{v_a (\gamma_{m_2} + \alpha_{p_2}) } & \mbox{if} \; \Tat^* < \Tat_{crit} \\
\frac{\alpha_{m_2,A} \LTR_A^*}{\gamma_{m_2} + \alpha_{p_2} } &  \mbox{if} \;  \Tat^* \ge \Tat_{crit}
\end{array} 
\right.
\end{eqnarray}

where $\Tat^* = \frac{\alpha_{p_1} \env_I^*}{\gamma_{p_1}}$. Finally,

\begin{eqnarray}
Pr55^* & = & \frac{\alpha_{p_2} \env_I^* + \alpha_{p_2} \env_A^*}{\alpha_{p_3}}
\end{eqnarray}

and

\begin{eqnarray}
p24^* & = & \frac{\alpha_{p_3}}{\gamma_{p_2}} \frac{Pr55^*}{w_2}
\end{eqnarray}

\subsection{Confidence interval on parameter estimation}
The function fmincon does not provide the Jacobian matrix needed to calculate the confidence interval for parameter estimation. Instead, we utilize the function lsqnonlin, a different MATLAB function for nonlinear fitting, to establish confidence on our parameter estimates. Due to the biologically realistic constraints on some parameters in the optimization process, the confidence intervals established here may not be reliable for certain parameters. However, it still provides a good estimate of the uncertainty related to the model and the available data. We note that the fitting results are comparable between lsqnonlin and fmincon; however, we elect to use fmincon in the main draft to be consistent with the original publication~\citep{demarino2020differences}.

\begin{table}
\begin{center}
\begin{tabular}{lcc}
\hline
 & T-Cell    & macrophages \\ \hline\hline
$\alpha_{m_1,A}$ & $1.70 \times 10^7 [0, 4.4\times 10^7]$   & $4.56 \times 10^5 [2.52\times 10^5,6.57\times 10^5]$ \\[.75 ex]
$\alpha_{m_2,A}$ & $2.47 \times 10^6 [0, 1.09\times 10^7]$   & $4.04 \times 10^4 [0,1.32\times 10^5]$ \\[.75 ex]
$\gamma_{m,1}$   & $2.63 \times 10^4 [0,8.41\times 10^4]$    & $2.68 \times 10^4 [1.16\times 10^4,4.19\times 10^6]$ \\[.75 ex]
$\gamma_{m,2}$   & $2.00 \times 10^3 [0,8.08\times 10^4]$    & $5.91 \times 10^2 [0,5.31\times 10^4]$ \\[.75 ex]                 
\hline\hline
$\alpha_{m_1,A}$ & $1.30 \times 10^7 [0,1.24\times 10^8]$   & $5.28 \times 10^5 [0,2.12\times 10^6]$ \\[.75 ex]
$\alpha_{m_2,A}$ & $1.75 \times 10^6 [0,2.03\times 10^7]$   & $5.32 \times 10^4 [0,5.03\times 10^5]$ \\[.75 ex]
$\gamma_{m,1}$   & $1.82 \times 10^4 [0,2.41\times 10^5] $    & $2.68 \times 10^4[0,7.00\times 10^4]$ \\[.75 ex]
$\gamma_{m,2}$   & $1.10 \times 10^3 [0,1.88\times 10^5]$    & $5.91 [0,1.58\times 10^5]$ \\[.75 ex]
$v_b$          & $1.47\times 10^2 [10,1.31\times 10^7]$  & $10 [10,5.71\times 10^2]$ \\ 
$n$            & $2.10 [1,2.05\times 10^3]$  & $1 [1,1.65\times 10^1]$ \\ [.75 ex]
$m$            & $1.34 \times 10^2[1,1.19\times 10^8]$  & $1 [1,2.28\times 10^1]$ \\ [.75 ex]
\hline
\end{tabular} \\
\end{center}
\vspace{0.0in}
\caption{Confidence intervals of estimated parameters using lsqnonlin for the linear (top half) and non-linear (bottom half) models. The estimated value is provided with the lower and upper 95\% confidence interval. Note that when the lower bound of the 95\% confidence interval is not biologically reasonable, we set it to the biological bound. The estimated values of the parameters are similar to that of fmincon.}
\label{table:CI_parameters}
\end{table} 

\subsection{Fitting using all parameters}
In Figures~\ref{fig:env_fitting} and~\ref{fig:tar_fitting}, we only attempt to fit four (linear model) and seven (nonlinear model) parameters. This reservation is due to the limited available data and to avoid potential problem of identifiability. However, in this subsection, we will demonstrate the flexibility of both models to capture the dynamics of the HIV-1 transcription process by fitting to all parameters (with the exception of the drug-related parameters $w_i$).
The results are demonstrated in Figures~\ref{fig:env_fitting_all_params} and~\ref{fig:tar_fitting_all_params}.

\begin{figure}[!ht]
\centering
\includegraphics[width=1\linewidth]{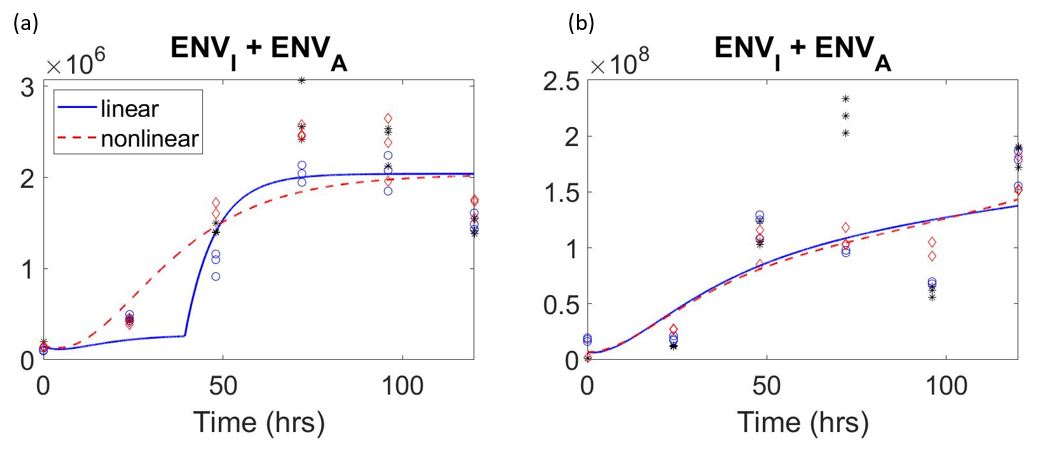}
\caption{\footnotesize Comparison of linear and nonlinear model for env-data fitting in both T-cells and macrophages. (a) ENV data and fitting in macrophages. (b) ENV data and fitting in T-cells. The fitting is carried out using all parameters.}
        \label{fig:env_fitting_all_params}
\end{figure}
\begin{figure}[!ht]
\centering
\includegraphics[width=1\linewidth]{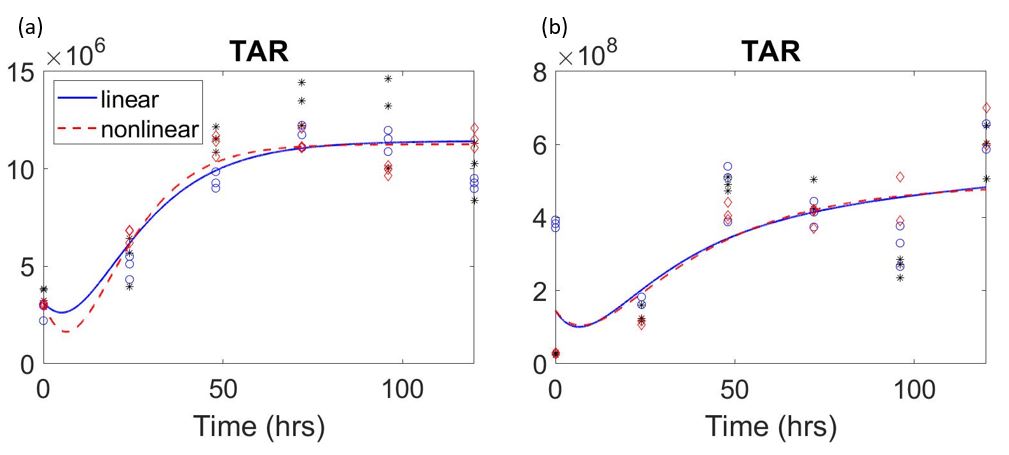}
\caption{\footnotesize Comparison of linear and nonlinear model for TAR-data fitting in both T-cells and macrophages. (a) TAR data and fitting in macrophages. (b) TAR data and fitting in T-cells. The fitting is carried out using all parameters.}
        \label{fig:tar_fitting_all_params}
\end{figure}

\subsection{Exact solution in the case of constant rates}
\label{appendix:exact_sol_derived}

The equations examined here correspond to the model defined in the main text with all coefficients taken to be constant.  To simplify notation we have not included the drug treatment factors $w_i$ but those can be easily incorporated by a re-interpretation of the coefficients.

\begin{eqnarray}
 \frac{d}{dt} \left[ \LTR_R \right]  &  =  & - k_{ON} \LTR_R + k_{OFF} \LTR_I  \\
 \frac{d}{dt} \left[ \LTR_I \right]   & = & k_{ON} \LTR_R - \left[k_A + k_{OFF} \right] \LTR_I + k_I \LTR_A   \\
\frac{d}{dt} \left[ \LTR_A \right]  & = & k_A \;\LTR_I -  k_I  \LTR_A \\
 \frac{d}{dt} \left[ \TAR \right]  & = &    \alpha_{m_1,R} \LTR_R  + \alpha_{m_1,I} \LTR_I + \alpha_{m_1,A} \LTR_A - \gamma_{m_1} \TAR  \\
 \frac{d}{dt} \left[ \env_{I} \right]  & = &   \alpha_{m_2,I} \LTR_I - \left( \gamma_{m,2} + \alpha_{p_1} + \alpha_{p_2} \right) \env_{I}  \\ 
 \frac{d}{dt} \left[ \env_{A} \right]  & = &   \alpha_{m_2,A} \LTR_A - \left( \gamma_{m,2} + \alpha_{p_2} \right) \env_{A}   \\
\frac{d}{dt} \left[ \Tat \right]  & = &  \alpha_{p_1} \env_{I} - \gamma_{p_1}  \Tat  \\
\frac{dP_{r55}}{dt} & = & \alpha_{p_2} \env_{I} + \alpha_{p_2} \env_{A} - \alpha_{p_3} P_{r55}  \\
\frac{dP_{24}}{dt} & = & \alpha_{p_3} P_{r55}  - \gamma_{p_2} P_{24} 
\end{eqnarray}

Note that in the $\Tat$ equation we have used a constant coefficient $\alpha_{m_2,A}$ as the multiplier on the $\LTR_A$ term.  In our full model we also examined the piecewise-constant case where
$\alpha_{m_2,A}/v_a$ if $\Tat < \Tat_{crit}$ and $\alpha_{m_2,A}$ if $\Tat \ge \Tat_{crit}$.  The solutions presented below can be adapted to address this piecewise-constant
situation by reusing the formulas starting at a new time $t=t^0$ and with new initial conditions $\LTR_R(t=t^0) := \LTR_R^0$, $\LTR(t=t^0) := \LTR_I^0$, etc. along with a numerical calculation of $t^0=t_{crit}$ as the time
for which $\Tat$ reaches $\Tat_{crit}$.  We have validated this approach by comparing the exact solution with piecewise-constants with numerically-computed solutions using MATLAB's ode solver {\tt ode23s}.  

\textbf{LTR equations}
Suppose $k_{OFF}$, $k_{ON}$, $k_A$ and $k_I$ are constant.  The $\LTR$ equations decouple from the rest and can be solved first.
\begin{eqnarray}
 \frac{d}{dt} \left[ \LTR_R \right]  &  =  & - k_{ON} \LTR_R + k_{OFF} \LTR_I  \\
 \frac{d}{dt} \left[ \LTR_I \right]   & = & k_{ON} \LTR_R - \left[k_A + k_{OFF} \right] \LTR_I + k_I \LTR_A   \\
\frac{d}{dt} \left[ \LTR_A \right]  & = & k_A \;\LTR_I -  k_I  \LTR_A.
\end{eqnarray}
This is a solvable linear system that can be expressed in matrix form
\begin{eqnarray}
\label{eq:original_linear_system}
\frac{d\vec{L}}{dt} & = & 
\left[
\begin{array}{ccc}
-a & b & 0 \\
a & - b - c & d \\
0 & c & - d
\end{array}
\right] \vec{L} \equiv A \;\vec{L},
\end{eqnarray}
where 
\begin{eqnarray}
\vec{L} & = &
\left[
\begin{array}{c}
\LTR_R \\
\LTR_I \\
\LTR_A
\end{array}
\right],
\end{eqnarray}
and for convenience we introduce matrix entries
\begin{eqnarray}
a = k_{ON},\quad
b = k_{OFF},\quad
c = k_A, \quad
d = k_I.
\end{eqnarray}

Solutions to equation~(\ref{eq:original_linear_system}) have the form $\vec{v} e^{\lambda t}$ with eigenvector $\vec{v}$ and eigenvalue $\lambda$ satisfying $A \vec{v} = \lambda \vec{v}$.
In particular, the eigenvalues of $A$ satisfy
\begin{eqnarray}
0 & = & \lambda \left[ \lambda^2 + (a + b + c + d) \lambda + (ac + ad + bd)  \right].
\end{eqnarray}
So $\lambda=\lambda_0= 0$, $\lambda=\lambda_1$ and $\lambda= \lambda_2$ where
\begin{eqnarray}
\lambda_{1,2} & = & \frac{1}{2} \left[ - (a + b + c + d) \pm \sqrt{ (a + b + c + d)^2 - 4 (ac + ad + bd)  } \right]
\label{lambda}
\end{eqnarray}
The eigenvectors corresponding to $\lambda_0$, $\lambda_1$, and $\lambda_2$ are
\begin{eqnarray}
\vec{v}_i & = & \left[
\begin{array}{c}
b (d+\lambda_i) \\
(a+\lambda_i) (d+\lambda_i) \\
c (a+ \lambda_i)
\end{array}
\right] \hspace{0.25in} \mbox{for $i=0,1,2$.}
\end{eqnarray}
Then,
\begin{eqnarray}
\label{eq:LTR_solution}
\vec{L} = \left[
\begin{array}{c}
\LTR_R(t) \\
\LTR_I(t) \\
\LTR_A(t)
\end{array}
\right] & = & \sum_{i=0}^{2} c_i \vec{v}_i e^{\lambda_i (t-t^0)},
\end{eqnarray}
where the constants $c_i$ for $i=0,1,2$ are solutions of the linear system
\begin{eqnarray}
\label{eq:c_linear_system} 
\left[
\begin{array}{ccc}
bd & b(d+\lambda_1) & b(d+\lambda_2) \\
ad & (a+\lambda_1)(d+\lambda_1) & (a+\lambda_2)(d+\lambda_2) \\
ca & c(a+\lambda_1) & c(a +\lambda_2)
\end{array}
\right] 
\left[
\begin{array}{c}
c_0 \\
c_1 \\
c_2
\end{array}
\right]
=
\left[
\begin{array}{c}
\LTR_R^0 \\
\LTR_I^0 \\
\LTR_A^0
\end{array}
\right],
\end{eqnarray}
and $\LTR_R^0$, $\LTR_I^0$, and $\LTR_A^0$ are the three $\LTR$ values at $t=t^0$ (Note that $t^0$ could be zero or some other suitable time).  Once the values of $c_0$, $c_1$, and $c_2$ are determined then equation~(\ref{eq:LTR_solution}) represents the $\LTR$ solution for the case of constant $k_{ON}$, $k_{OFF}$, $k_A$, and $k_I$.

\textbf{RNA equations - $\TAR$, $\env_I$, $\env_A$}

The equations for the short and long RNA sequences are
\begin{eqnarray}
 \frac{d}{dt} \left[ \TAR \right]  & = &    \alpha_{m_1,R} \LTR_R  + \alpha_{m_1,I} \LTR_I + \alpha_{m_1,A} \LTR_A - \gamma_{m_1} \TAR  \\
 \frac{d}{dt} \left[ \env_{I} \right]  & = &   \alpha_{m_2,I} \LTR_I - \left( \gamma_{m,2} + \alpha_{p_1} + \alpha_{p_2} \right) \env_{I}  \\ 
 \frac{d}{dt} \left[ \env_{A} \right]  & = &   \alpha_{m_2,A} \LTR_A - \left( \gamma_{m,2} + \alpha_{p_2} \right) \env_{A} 
\end{eqnarray}

Note that since $\LTR_R$, $\LTR_I$, and $\LTR_A$ are linear combinations of exponential functions (or a constant) all three of these equations have the general form
\bea
\frac{dy}{dt} + \gamma y & = & p_0 + \sum_{i=1}^{2} p_i e^{\lambda_i (t-t^0)}.
\eea
Introducing an integrating factor $e^{\gamma (t-t^0)}$, integrating, and using $y(t^0)=y^0$ leads to the solution
\bea
 y & = & y^0 e^{-\gamma (t-t^0)} + \frac{p_0}{\gamma} \left( 1 - e^{-\gamma (t-t^0)} \right) \nonumber \\
 & & + \sum_{i=1}^{2} \frac{p_i}{\lambda_i+\gamma} \left( e^{\lambda_i (t-t^0)} - e^{-\gamma (t - t^0)}\right),\hspace{0.10in}\mbox{if $\gamma \neq 0$}, \\
  y & = & y^0 + p_0 (t-t^0) + \sum_{i=1}^{2} \frac{p_i}{\lambda_i} \left( e^{\lambda_i (t-t^0)} - 1\right),\hspace{0.25in}\mbox{if $\gamma = 0$}.
\eea

In the following three subsections we write out the values for $\gamma$, $p_0$, $p_1$, and $p_2$ for the $\TAR$, $\env_I$, and $\env_A$ equations.

\textbf{$\TAR$ equation}

\bea
 \frac{d}{dt} \left[ \TAR \right] + \gamma_{m_1} \TAR & = &  \alpha_{m_1,R} \LTR_R  + \alpha_{m_1,I} \LTR_I + \alpha_{m_1,A} \LTR_A   \nonumber \\
& = & 
\alpha_{m_1,R}  \left( c_0 bd + c_1 b(d+\lambda_1) e^{\lambda_1 (t-t^0)} + c_2 b(d+\lambda_2) e^{\lambda_2 (t-t^0)} \right) \nonumber \\
& & \mbox{} + \alpha_{m_1,I} \left( c_0 ad + c_1 (a+\lambda_1)(d+\lambda_1)e^{\lambda_1 (t-t^0)} + c_2 (a+\lambda_2)(d+\lambda_2) e^{\lambda_2 (t-t^0)} \right), \nonumber  \\
& & \mbox{} + \alpha_{m_1,A} \left( c_0 ca + c_1 c(a+\lambda_1) e^{\lambda_1 (t-t^0)} + c_2 c(a+\lambda_2) e^{\lambda_2 (t-t^0)}\right)
 \eea
 so for this case
 \bea
 \gamma & = & \gamma_{m_1} \equiv \gamma^{(TAR)}, \\
 p_0 & = & c_0 \left( \alpha_{m_1,R} bd + \alpha_{m_1,I} ad+  \alpha_{m_1,A} ac \right) \equiv p_0^{(TAR)},\\
 p_1 & = & c_1 \left( \alpha_{m_1,R} b(d+\lambda_1) + \alpha_{m_1,I}(a+\lambda_1)(d+\lambda_1) +  \alpha_{m_1,A} c(a+\lambda_1) \right) \nonumber\\
 & & \equiv p_1^{(TAR)} ,\\
 p_2 & = & c_2 \left( \alpha_{m_1,R} b(d+\lambda_2) + \alpha_{m_1,I} (a+\lambda_2)(d+\lambda_2)+  \alpha_{m_1,A} c(a+\lambda_2) \right) \nonumber \\
 & & \equiv p_2^{(TAR)}.
 \eea
Then,
\bea
 \TAR & = & \TAR^0 e^{-\gamma (t-t^0)} + \frac{p_0}{\gamma} \left( 1 - e^{-\gamma (t-t^0)} \right) \nonumber \\
 & & \mbox{} + \sum_{i=1}^{2} \frac{p_i}{\lambda_i+\gamma} \left( e^{\lambda_i (t-t^0)} - e^{-\gamma (t - t^0)}\right),\hspace{0.25in}\mbox{if $\gamma \neq 0$}, \\
  \TAR & = & \TAR^0 + p_0 (t-t^0) + \sum_{i=1}^{2} \frac{p_i}{\lambda_i} \left( e^{\lambda_i (t-t^0)} - 1\right),\hspace{0.25in}\mbox{if $\gamma = 0$}.
\eea

\textbf{$\env_I$ equation}

\bea
 \frac{d}{dt} \left[ \env_{I} \right]  +  \left( \gamma_{m,2} + \alpha_{p_1} + \alpha_{p_2} \right) \env_{I} & = &   \alpha_{m_2,I} \LTR_I  \nonumber  \\ 
  & = & \alpha_{m_2,I} \left( c_0 ad + c_1 (a+ \lambda_1)(d+\lambda_1) e^{\lambda_1 (t-t^0)} \right. \nonumber \\
  & & \mbox{} \left. + c_2 (a+ \lambda_2)(d+\lambda_2) e^{\lambda_2 (t-t^0)} \right), 
 \eea
 so for this case
 \bea
 \gamma & = & \gamma_{m,2} + \alpha_{p_1} + \alpha_{p_2} \equiv \gamma^{(I)}, \\
 p_0 & = & \alpha_{m_2,I} c_0 ad \equiv p_0^{(I)},\\
 p_1 & = & \alpha_{m_2,I} c_1 (a+ \lambda_1)(d+\lambda_1) \equiv p_1^{(I)},\\
 p_2 & = & \alpha_{m_2,I} c_2 (a+ \lambda_2)(d+\lambda_2) \equiv p_2^{(I)}.
 \eea
 Then,
 \bea
 \env_I & = & \env_I^0 e^{-\gamma (t-t^0)} + \frac{p_0}{\gamma} \left( 1 - e^{-\gamma (t-t^0)} \right) \nonumber \\
 & & \mbox{} + \sum_{i=1}^{2} \frac{p_i}{\lambda_i+\gamma} \left( e^{\lambda_i (t-t^0)} - e^{-\gamma (t - t^0)}\right),\hspace{0.25in}\mbox{if $\gamma \neq 0$}, \\
  \env_I & = & \env_I^0 + p_0 (t-t^0) + \sum_{i=1}^{2} \frac{p_i}{\lambda_i} \left( e^{\lambda_i (t-t^0)} - 1\right),\hspace{0.25in}\mbox{if $\gamma = 0$}.
\eea

\textbf{$\env_A$ equation}

\bea
 \frac{d}{dt} \left[ \env_{A} \right]  +  \left( \gamma_{m,2} + \alpha_{p_2} \right) \env_{A}& = &   \alpha_{m_2,A} \LTR_A   \\
  & = & \alpha_{m_2,A} \left( c_0 ca + c_1 c (a+\lambda_1) e^{\lambda_1 (t-t^0)} + c_2 c(a+\lambda_2) e^{\lambda_2 (t-t^0)} \right), \nonumber 
 \eea
so for this case
 \bea
 \gamma & = & \gamma_{m,2} + \alpha_{p_2} \equiv \gamma^{(A)}, \\
 p_0 & = & \alpha_{m_2,A} c_0 ca  \equiv p_0^{(A)},\\
 p_1 & = & \alpha_{m_2,A} c_1 c(a+\lambda_1)\equiv p_1^{(A)},\\
 p_2 & = & \alpha_{m_2,A} c_2 c(a+\lambda_2)\equiv p_2^{(A)}.
 \eea
Then,
 \bea
 \env_A & = & \env_A^0 e^{-\gamma (t-t^0)} + \frac{p_0}{\gamma} \left( 1 - e^{-\gamma (t-t^0)} \right) \nonumber \\
 & & \mbox{} + \sum_{i=1}^{2} \frac{p_i}{\lambda_i+\gamma} \left( e^{\lambda_i (t-t^0)} - e^{-\gamma (t - t^0)}\right),\hspace{0.25in}\mbox{if $\gamma \neq 0$}, \\
  \env_A & = & \env_A^0 + p_0 (t-t^0) + \sum_{i=1}^{2} \frac{p_i}{\lambda_i} \left( e^{\lambda_i (t-t^0)} - 1\right),\hspace{0.25in}\mbox{if $\gamma = 0$}.
\eea

\textbf{$\Tat$ equation}

\bea
\frac{d}{dt} \left[ \Tat \right] +\gamma_{p_1} \Tat & = &  \alpha_{p_1} \env_{I}.
\eea

In this subsection for $\Tat$ we have
 \bea
 y^0 & = & \env_I^0,\\
 \gamma & = & \gamma_{m,2} + \alpha_{p_1} + \alpha_{p_2} = \gamma^{(I)}, \\
 p_0 & = & \alpha_{m_2,I} c_0 ad = p_0^{(I)},\\
 p_1 & = & \alpha_{m_2,I} c_1 (a+ \lambda_1)(d+\lambda_1) = p_1^{(I)},\\
 p_2 & = & \alpha_{m_2,I} c_2 (a+ \lambda_2)(d+\lambda_2) = p_2^{(I)}.
 \eea

\textbf{case} $\gamma \neq 0$, $\gamma_{p_1} \neq 0$, $\gamma \neq \gamma_{p_1}$

Based on the form of $\env_I$, and noting $\gamma = \gamma_{m,2} + \alpha_{p_1} + \alpha_{p_2}$, this equation has the form
\bea
\frac{d \Tat}{dt} + \gamma_{p_1} \Tat & = & \alpha_{p_1} \left[
 y^0 e^{-\gamma (t-t^0)} + \frac{p_0}{\gamma} \left( 1 - e^{-\gamma (t-t^0)} \right) + \sum_{i=1}^{2} \frac{p_i}{\lambda_i+\gamma} \left( e^{\lambda_i (t-t^0)} - e^{-\gamma (t - t^0)}\right) \right],\nonumber \\
  & = & \alpha_{p_1} \frac{p_0}{\gamma} +  \alpha_{p_1} \left[ \sum_{i=1}^{2} \frac{p_i}{\lambda_i+\gamma}  e^{\lambda_i (t-t^0)}  \right] + \alpha_{p_1} \left[ y^0 - \frac{p_0}{\gamma} - \sum_{i=1}^{2} \frac{p_i}{\lambda_i+\gamma} \right] e^{-\gamma (t-t^0)},\nonumber \\
  & = & q_0 + \sum_{i=1}^{2} q_i e^{\lambda_i (t-t^0)} + q_3 e^{-\gamma (t-t^0)}.
  \eea
where
\bea
q_0 & = & \alpha_{p_1} \frac{p_0}{\gamma} \nonumber \\
q_i & = & \alpha_{p_1} \frac{p_i}{\lambda_i+\gamma} \hspace{0.25in} \mbox{for $i=1,2$} \nonumber \\
q_3 & = & \alpha_{p_1} \left[ y^0 - \frac{p_0}{\gamma} - \sum_{i=1}^{2} \frac{p_i}{\lambda_i+\gamma} \right].
\eea
 In this case
 \bea
 \Tat & = & \Tat^0 e^{-\gamma_{p_1} (t-t^0)} + \frac{q_0}{\gamma_{p_1}} \left[ 1 - e^{-\gamma_{p_1} (t-t^0)} \right] \nonumber \\
  & & \mbox{} + \sum_{i=1}^{2} \frac{q_i}{\lambda_i + \gamma_{p_1}} \left[ e^{\lambda_i (t-t^0)} - e^{-\gamma_{p_1} (t-t^0)} \right] \nonumber \\
  & & \mbox{} + \frac{q_3}{\gamma_{p_1} - \gamma} \left[ e^{-\gamma (t-t^0)} - e^{-\gamma_{p_1} (t-t^0)} \right]
 \eea

\textbf{case} $\gamma \neq 0$, $\gamma_{p_1} = 0$

Here the equation for $\Tat$ is 
\bea
\frac{d \Tat}{dt}  & = & q_0 + \sum_{i=1}^{2} q_i e^{\lambda_i (t-t^0)} + q_3 e^{-\gamma (t-t^0)},
\eea
where
\bea
q_0 & = & \alpha_{p_1} \frac{p_0}{\gamma} \nonumber \\
q_i & = & \alpha_{p_1} \frac{p_i}{\lambda_i+\gamma} \hspace{0.25in} \mbox{for $i=1,2$} \nonumber \\
q_3 & = & \alpha_{p_1} \left[ y^0 - \frac{p_0}{\gamma} - \sum_{i=1}^{2} \frac{p_i}{\lambda_i+\gamma} \right].\
\eea
 In this case
 \bea
 \Tat & = & \Tat^0 + q_0 (t-t^0) + \sum_{i=1}^{2} \frac{q_i}{\lambda_i} \left[ e^{\lambda_i (t-t^0)} - 1 \right]  - \frac{q_3}{\gamma} \left[ e^{-\gamma (t-t^0)} - 1 \right]
 \eea

\textbf{case} $\gamma = \gamma_{p_1} \neq 0$

Here the equation for $\Tat$ is
\bea
\frac{d \Tat}{dt} + \gamma \Tat 
  & = & q_0 + \sum_{i=1}^{2} q_i e^{\lambda_i (t-t^0)} + q_3 e^{-\gamma (t-t^0)}.
  \eea
where
\bea
q_0 & = & \alpha_{p_1} \frac{p_0}{\gamma} \nonumber \\
q_i & = & \alpha_{p_1} \frac{p_i}{\lambda_i+\gamma} \hspace{0.25in} \mbox{for $i=1,2$} \nonumber \\
q_3 & = & \alpha_{p_1} \left[ y^0 - \frac{p_0}{\gamma} - \sum_{i=1}^{2} \frac{p_i}{\lambda_i+\gamma} \right].
\eea
 In this case
 \bea
 \Tat & = & \Tat^0 e^{-\gamma (t-t^0)} + \frac{q_0}{\gamma} \left[ 1 - e^{-\gamma (t-t^0)} \right] \nonumber \\
  & & \mbox{} + \sum_{i=1}^{2} \frac{q_i}{\lambda_i + \gamma} \left[ e^{\lambda_i (t-t^0)} - e^{-\gamma (t-t^0)} \right] 
  + q_3 (t-t^0) e^{-\gamma (t-t^0)} 
 \eea

\textbf{case} $\gamma = 0$, $\gamma_{p_1} \neq 0$

Here the equation for $\Tat$ is
\bea
\frac{d \Tat}{dt} + \gamma_{p_1} \Tat & = & q_0 + \sum_{i=1}^{2} q_i e^{\lambda_i (t-t^0)} + q_3 (t-t^0).
  \eea
where
\bea
q_0 & = & \alpha_{p_1} \left[ y^0 - \sum_{i=1}^{2} \frac{p_i}{\lambda_i} \right]  \nonumber \\
q_i & = & \alpha_{p_1} \frac{p_i}{\lambda_i} \hspace{0.25in} \mbox{for $i=1,2$} \nonumber \\
q_3 & = & \alpha_{p_1} p_0.
\eea
In this case
 \bea
 \Tat & = & \Tat^0 e^{-\gamma_{p_1} (t-t^0)} + \frac{q_0}{\gamma_{p_1}} \left[ 1 - e^{-\gamma_{p_1} (t-t^0)} \right] \nonumber \\
  & & \mbox{} + \sum_{i=1}^{2} \frac{q_i}{\lambda_i + \gamma_{p_1}} \left[ e^{\lambda_i (t-t^0)} - e^{-\gamma_{p_1} (t-t^0)} \right] \nonumber \\
  & & \mbox{} + \frac{q_3}{\gamma_{p_1}^2} \left[ e^{-\gamma_{p_1} (t-t^0)}  - 1 + \gamma_{p_1} (t-t^0)  \right]
 \eea

\textbf{case} $\gamma = \gamma_{p_1} = 0$

Here the equation for $\Tat$ is
\bea
\frac{d \Tat}{dt} & = & q_0 + \sum_{i=1}^{2} q_i e^{\lambda_i (t-t^0)} + q_3 (t-t^0).
  \eea
where
\bea
q_0 & = & \alpha_{p_1} \left[ y^0 - \sum_{i=1}^{2} \frac{p_i}{\lambda_i} \right]  \nonumber \\
q_i & = & \alpha_{p_1} \frac{p_i}{\lambda_i} \hspace{0.25in} \mbox{for $i=1,2$} \nonumber \\
q_3 & = & \alpha_{p_1} p_0.
\eea
In this case
\bea
 \Tat & = & \Tat^0 + q_0 (t-t^0) + \sum_{i=1}^{2} \frac{q_i}{\lambda_i} \left[ e^{\lambda_i (t-t^0)} - 1 \right] 
  + \frac{1}{2} q_3 (t-t^0)^2 
 \eea

\textbf{$P_{r55}$ equation}

\bea
\frac{dP_{r55}}{dt} + \alpha_{p_3} P_{r55} & = & \alpha_{p_2} \env_{I} + \alpha_{p_2} \env_{A}  
\eea

In this case $\alpha_{p_3} \neq 0$ (we do not consider $\alpha_{p_3}=0$).  Also we'll assume $\gamma^{(I)} \neq 0$ and $\gamma^{(A)} \neq 0$.

Inserting the results for $\env_I$ and $\env_A$ and also introducing notation $\alpha_{p_2}^{(I)}$ and $\alpha_{p_2}^{(A)}$ for the two $\alpha_{p_2}$ coefficients 
leads to
\bea
\frac{d}{dt} \left[ e^{\alpha_{p_3} (t-t^0)}P_{r55} \right] & = & \alpha_{p_2}^{(I)} \env_{I}^0 e^{(\alpha_{p_3}-\gamma^{(I)})(t-t^0)} + 
\alpha_{p_2}^{(I)} \frac{p_0^{(I)} }{\gamma^{(I)} } \left( e^{\alpha_{p_3}(t-t^0)} - e^{(\alpha_{p_3} - \gamma^{(I)}) (t-t^0)} \right) \nonumber \\
& & \mbox{} + \alpha_{p_2}^{(I)} \sum_{i=1}^{2} \frac{p_i^{(I)} }{\lambda_i + \gamma^{(I)} } \left(   e^{(\alpha_{p_3}+\lambda_i)(t-t^0)}  - e^{(\alpha_{p_3}-\gamma^{(I)})(t-t^0)} \right) 
\nonumber \\
& & \mbox{} + \alpha_{p_2}^{(A)} \env_{A}^0 e^{(\alpha_{p_3}-\gamma^{(A)})(t-t^0)} + 
\alpha_{p_2}^{(A)} \frac{p_0^{(A)} }{\gamma^{(A)} } \left( e^{\alpha_{p_3}(t-t^0)} - e^{(\alpha_{p_3} - \gamma) (t-t^0)} \right)\nonumber \\
& & \mbox{} + \alpha_{p_2}^{(A)} \sum_{i=1}^{2} \frac{p_i^{(A)} }{\lambda_i + \gamma^{(A)} } \left(   e^{(\alpha_{p_3}+\lambda_i)(t-t^0)}  - e^{(\alpha_{p_3}-\gamma^{(A)})(t-t^0)} \right)
\eea
Integrating, using the condition $P_{r55}(t^0) = P_{r55}^0$ gives
\bea
P_{r55} & = & P_{r55}^0 e^{-\alpha_{p_3} (t-t^0)} 
 + \frac{\alpha_{p_2}^{(I)} \env_I^0 }{\alpha_{p_3} - \gamma^{(I)}} \left[  e^{-\gamma^{(I)} (t-t^0)} - e^{-\alpha_{p_3} (t-t^0)} \right]  \nonumber \\
 & & \mbox{} + \frac{\alpha_{p_2}^{(I)} p_0^{(I)} }{\gamma^{(I)}} \left[ \frac{1}{\alpha_{p_3}} \left( 1 - e^{-\alpha_{p_3} (t-t^0)} \right) - \frac{1}{\alpha_{p_3} - \gamma^{(I)} } 
 \left( e^{-\gamma^{(I)}  (t-t^0)} - e^{-\alpha_{p_3} (t-t^0)} \right) \right] \nonumber \\
 & & \mbox{} + \alpha_{p_2}^{(I)} \sum_{i=1}^{2} \frac{p_i^{(I)}}{\lambda_i + \gamma^{(I)} } \left[ \frac{ e^{\lambda_i (t-t^0)} - e^{-\alpha_{p_3} (t-t^0)} }{\alpha_{p_3} + \lambda_i} 
 - \frac{ e^{-\gamma^{(I)} (t-t^0)} - e^{-\alpha_{p_3} (t-t^0)} }{\alpha_{p_3} - \gamma^{(I)} }  \right] \nonumber \\
  & & \mbox{} + \frac{\alpha_{p_2}^{(A)} \env_A^0 }{\alpha_{p_3} - \gamma^{(A)}} \left[  e^{-\gamma^{(A)} (t-t^0)} - e^{-\alpha_{p_3} (t-t^0)} \right]  \nonumber \\
 & & \mbox{} + \frac{\alpha_{p_2}^{(A)} p_0^{(A)} }{\gamma^{(A)}} \left[ \frac{1}{\alpha_{p_3}} \left( 1 - e^{-\alpha_{p_3} (t-t^0)} \right) - \frac{1}{\alpha_{p_3} - \gamma^{(A)} } 
 \left( e^{-\gamma^{(A)}  (t-t^0)} - e^{-\alpha_{p_3} (t-t^0)} \right) \right] \nonumber \\
%
%
%
 & & \mbox{} + \alpha_{p_2}^{(A)} \sum_{i=1}^{2} \frac{p_i^{(A)}}{\lambda_i + \gamma^{(A)} } \left[ \frac{ e^{\lambda_i (t-t^0)} - e^{-\alpha_{p_3} (t-t^0)} }{\alpha_{p_3} + \lambda_i} 
 - \frac{ e^{-\gamma^{(A)} (t-t^0)} - e^{-\alpha_{p_3} (t-t^0)} }{\alpha_{p_3} - \gamma^{(A)} }  \right].
\eea
This expression has common exponential forms.  Combining these gives
\bea
P_{r55} & = & r_0 + r_1 e^{-\alpha_{p_3} (t-t^0)} + r_2^{(I)} e^{-\gamma^{(I)} (t-t^0)} + r_2^{(A)} e^{-\gamma^{(A)} (t-t^0)} + \sum_{i=1}^{2} r_{3i}  e^{\lambda_i (t-t^0)},\nonumber \\
\eea
where
\bea
r_0 & = & \frac{\alpha_{p_2}^{(I)} p_0^{(I)} }{\alpha_{p_3} \gamma^{(I)}} + \frac{\alpha_{p_2}^{(A)} p_0^{(A)} }{\alpha_{p_3} \gamma^{(A)}}, \\
r_1 & = & P_{r55}^0 - \frac{\alpha_{p_2}^{(I)} \env_I^0 }{\alpha_{p_3} - \gamma^{(I)}} - \frac{\alpha_{p_2}^{(I)} p_0^{(I)} }{\alpha_{p_3} \gamma^{(I)}} 
+\frac{\alpha_{p_2}^{(I)} p_0^{(I)}}{\gamma^{(I)}} \frac{1}{\alpha_{p_3} - \gamma^{(I)} }
\nonumber \\
 & & \mbox{} +
\alpha_{p_2}^{(I)} \sum_{i=1}^{2} \frac{p_i^{(I)}}{ \lambda_i + \gamma^{(I)} } \left[ \frac{1}{\alpha_{p_3} - \gamma^{(I)} } - \frac{1}{\alpha_{p_3} +\lambda_i}\right] \nonumber \\
 & & \mbox{} - \frac{\alpha_{p_2}^{(A)} \env_A^0 }{\alpha_{p_3} - \gamma^{(A)}} - \frac{\alpha_{p_2}^{(A)} p_0^{(A)} }{\alpha_{p_3} \gamma^{(A)}} 
 +\frac{\alpha_{p_2}^{(A)} p_0^{(A)}}{\gamma^{(A)}} \frac{1}{\alpha_{p_3} - \gamma^{(A)} }
 \nonumber \\
 & & \mbox{} +
\alpha_{p_2}^{(A)} \sum_{i=1}^{2} \frac{p_i^{(A)}}{ \lambda_i + \gamma^{(A)} } \left[ \frac{1}{\alpha_{p_3} - \gamma^{(A)} } - \frac{1}{\alpha_{p_3} +\lambda_i}\right]  \\
r_2^{(I)} & = & \frac{\env_I^0 \alpha_{p_2}^{(I)} }{\alpha_{p_3} - \gamma^{(I)}} 
-\frac{\alpha_{p_2}^{(I)} p_0^{(I)}}{\gamma^{(I)}} \frac{1}{\alpha_{p_3} - \gamma^{(I)} }
-  \frac{ \alpha_{p_2}^{(I)} }{\alpha_{p_3} - \gamma^{(I)} } \sum_{i=1}^{2} \frac{p_i^{(I)}}{ \lambda_i + \gamma^{(I)} }  \\
r_2^{(A)} & = & \frac{\env_A^0 \alpha_{p_2}^{(A)} }{\alpha_{p_3} - \gamma^{(A)}} 
-\frac{\alpha_{p_2}^{(A)} p_0^{(A)}}{\gamma^{(A)}} \frac{1}{\alpha_{p_3} - \gamma^{(A)} }
- \frac{ \alpha_{p_2}^{(A)} }{\alpha_{p_3} - \gamma^{(A)} } \sum_{i=1}^{2} \frac{p_i^{(A)}}{ \lambda_i + \gamma^{(A)} } \\
r_{3i} & = & \alpha_{p_2}^{(I)} \frac{p_i^{(I)}}{ (\lambda_i + \gamma^{(I)} )(\alpha_{p_3} + \lambda_i)} +
\alpha_{p_2}^{(A)} \frac{p_i^{(A)}}{ (\lambda_i + \gamma^{(A)} )(\alpha_{p_3} + \lambda_i)}
\eea

Other cases (e.g.~$\gamma^{I}=0$, $\gamma^{A}=0$, $\alpha_{p_3}=\gamma^{I}$, $\alpha_{p_3}=\gamma^{A}$, etc. can be addressed in a similar manner as necessary.

\textbf{$P_{24}$ equation}

\bea
\label{eq:P24ode}
\frac{dP_{24}}{dt} + \gamma_{p_2} P_{24}& = & \alpha_{p_3} P_{r55}   
\eea
Using the result for $P_{r55}$ leads to the equation
\bea
\frac{dP_{24}}{dt} + \gamma_{p_2} P_{24}& = & \alpha_{p_3} r_0 + \alpha_{p_3} r_1 e^{-\alpha_{p_3} (t-t^0)} \\
& & \mbox{} + \alpha_{p_3} r_2^{(I)} e^{-\gamma^{(I)} (t-t^0)} + \alpha_{p_3} r_2^{(A)} e^{-\gamma^{(A)} (t-t^0)} + \alpha_{p_3} \sum_{i=1}^{2} r_{3i}  e^{\lambda_i (t-t^0)}.
\nonumber
\eea

\textbf{cases} $\gamma_{p_2} \neq 0$, $\gamma^{(I)} \neq 0$, $\gamma^{(A)} \neq 0$ and $\gamma_{p_2} \neq \gamma^{(I)}$, $\gamma_{p_2} \neq \gamma^{(A)}$

Multiplying equation~(\ref{eq:P24ode}) through by the integrating factor $e^{\gamma_{p_2} t}$ leads to
\bea
\frac{d}{dt} \left( e^{\gamma_{p_2} (t-t^0)} P_{24} \right) & = & \alpha_{p_3} r_0 e^{\gamma_{p_2} (t-t^0)} 
+ \alpha_{p_3} r_1 e^{(\gamma_{p_2} - \alpha_{p_3}) (t-t^0)} \nonumber \\
& & \mbox{} + \alpha_{p_3} r_2^{(I)} e^{(\gamma_{p_2} -\gamma^{(I)}) (t-t^0)}
+ \alpha_{p_3} r_2^{(A)} e^{(\gamma_{p_2} -\gamma^{(A)}) (t-t^0)} \nonumber \\
& & \mbox{}  + \alpha_{p_3} \sum_{i=1}^{2} r_{3i}  e^{(\gamma_{p_2} +\lambda_i )(t-t^0)},
\eea
Upon integrating and using $P_{24}(t^0)=P_{24}^0$ we find 
\bea
P_{24} & = & P_{24}^0 e^{-\gamma_{p_2} (t-t^0)} + \frac{\alpha_{p_3} r_0}{\gamma_{p_2}} \left[ 1- e^{-\gamma_{p_2} (t-t^0)} \right] \nonumber \\
 & & \mbox{} + \frac{\alpha_{p_3} r_1}{\gamma_{p_2} - \alpha_{p_3} } \left[ e^{-\alpha_{p_3}(t-t^0)} - e^{-\gamma_{p_2}(t-t^0)} \right] \nonumber \\
 & & \mbox{} + \frac{\alpha_{p_3} r_2^{(I)} }{\gamma_{p_2} - \gamma^{(I)} } \left[ e^{-\gamma^{(I)}(t-t^0)} - e^{-\gamma_{p_2}(t-t^0)}  \right] \nonumber \\
 & & \mbox{} + \frac{\alpha_{p_3} r_2^{(A)} }{\gamma_{p_2} - \gamma^{(A)} } \left[ e^{-\gamma^{(A)}(t-t^0)} - e^{-\gamma_{p_2}(t-t^0)}  \right] \nonumber \\
 & & \mbox{} + \alpha_{p_3} \sum_{i=1}^{2} \frac{r_{3i}}{\gamma_{p_2} + \lambda_i}
 \left[ e^{\lambda_i (t-t^0)} - e^{-\gamma_{p_2}(t-t^0)}  \right]
\eea
Combining terms gives the solution
\bea
P_{24} & = & s_0 + s_1 e^{-\gamma_{p_2} (t-t^0)} + s_2 e^{-\alpha_{p_3} (t-t^0)} \nonumber \\
& & \mbox{} + s_3^{(I)} e^{-\gamma^{(I)} (t-t^0)} + 
s_3^{(A)} e^{-\gamma^{(A)} (t-t^0)} + \sum_{i=1}^{2} s_{4i} e^{\lambda_i (t-t^0)},
\eea
where
\bea
s_0 & = & \frac{\alpha_{p_3} r_0}{\gamma_{p_2}} \\
s_1 & = & P_{24}^0 - \frac{\alpha_{p_3} r_0}{\gamma_{p_2}} - \frac{\alpha_{p_3} r_1}{ \gamma_{p_2} - \alpha_{p_3}} - \frac{\alpha_{p_3} r_2^{(I)} }{\gamma_{p_2} - \gamma^{(I)} }
- \frac{\alpha_{p_3} r_2^{(A)} }{\gamma_{p_2} - \gamma^{(A)} } - \alpha_{p_3} \sum_{i=1}^{2} \frac{r_{3i}}{\gamma_{p_2} + \lambda_i} \nonumber \\ \\
s_2 & = & \frac{\alpha_{p_3} r_1}{\gamma_{p_2} - \alpha_{p_3}} \\
s_3^{(I)} & = & \frac{\alpha_{p_3} r_2^{(I)}}{\gamma_{p_2} - \gamma^{(I)}} \\
s_3^{(A)} & = & \frac{\alpha_{p_3} r_2^{(A)}}{\gamma_{p_2} - \gamma^{(A)}} \\
s_{4i} & = & \alpha_{p_3} \frac{r_{3i}}{\gamma_{p_2} + \lambda_i}.
\eea

\textbf{case} $\gamma_{p_2} = 0$, $\gamma^{(I)} \neq 0$, $\gamma^{(A)} \neq 0$

Here
\bea
\frac{d P_{24}}{dt} & = & \alpha_{p_3} r_0  
+ \alpha_{p_3} r_1 e^{- \alpha_{p_3} (t-t^0)} \nonumber \\
& & \mbox{} + \alpha_{p_3} r_2^{(I)} e^{ -\gamma^{(I)} (t-t^0)}
+ \alpha_{p_3} r_2^{(A)} e^{ -\gamma^{(A)} (t-t^0)} \nonumber \\
& & \mbox{}  + \alpha_{p_3} \sum_{i=1}^{2} r_{3i}  e^{\lambda_i (t-t^0)}.
\eea
Integrating and applying $P_{24}(t^0)=P_{24}^0$ gives
\bea
P_{24} & = & P_{24}^0 + \alpha_{p_3} r_0 (t-t^0) - r_1 \left[ e^{-\alpha_{p_3}(t-t^0)}  - 1 \right]  
- \frac{\alpha_{p_3} r_2^{(I)} }{ \gamma^{(I)} } \left[ e^{-\gamma^{(I)} (t-t^0)}  - 1 \right] \nonumber \\
 & & \mbox{}  
 - \frac{\alpha_{p_3} r_2^{(A)} }{ \gamma^{(A)} } \left[ e^{-\gamma^{(A)} (t-t^0)}  - 1 \right]
+ \alpha_{p_3} \sum_{i=1}^{2} \frac{r_{3i}}{\lambda_i} \left[ e^{\lambda_i (t-t^0)} - 1\right]
\eea

We can write this as 
\bea
P_{24} & = & s_0 + s_1 (t-t^0) + s_2 e^{-\alpha_{p_3} (t-t^0)} \nonumber \\
& & \mbox{} + s_3^{(I)} e^{-\gamma^{(I)} (t-t^0)} + 
s_3^{(A)} e^{-\gamma^{(A)} (t-t^0)} + \sum_{i=1}^{2} s_{4i} e^{\lambda_i (t-t^0)},
\eea
where
\bea
s_0 & = & P_{24}^0 + r_1 + \frac{\alpha_{p_3} r_2^{(I)} }{ \gamma^{(I)} } + \frac{\alpha_{p_3} r_2^{(A)} }{ \gamma^{(A)} } - \alpha_{p_3} \sum_{i=1}^{2}  \frac{r_{3i}}{\lambda_i} \\
s_1 & = & \alpha_{p_3} r_0 ,\\
s_2 & = & -r_1,\\
s_3^{(I)} & = & - \frac{\alpha_{p_3} r_2^{(I)} }{ \gamma^{(I)} },\\
s_3^{(A)} & = & - \frac{\alpha_{p_3} r_2^{(A)} }{ \gamma^{(A)} },\\
s_{4i} & = & \alpha_{p_3} \frac{r_{3i}}{\lambda_i}.
\eea

\bibliographystyle{plainnat}
\bibliography{ref.bib}   


\end{document}